\documentclass[acmsmall,screen,nonacm]{acmart}
\setcopyright{none}
\usepackage{tabularx}
\usepackage{framed}
\usepackage{pifont}
\usepackage{placeins}
\microtypesetup{expansion=true,stretch=40,shrink=40}

\usepackage{amsmath,amssymb,amsthm}
\usepackage{mathtools}
\usepackage{booktabs}
\usepackage{tabularx}
\usepackage{graphicx}
\usepackage{listings}
\usepackage{xcolor}
\usepackage{tikz}
\usetikzlibrary{positioning,calc,arrows.meta,fit,backgrounds,chains}

\usepackage{stmaryrd}

\theoremstyle{plain}
\newtheorem{theorem}{Theorem}
\newtheorem{proposition}{Proposition}
\newtheorem{lemma}{Lemma}
\newtheorem{corollary}{Corollary}
\theoremstyle{definition}
\newtheorem{definition}{Definition}
\ifdefined\remark\else\newtheorem{remark}{Remark}\fi

\newcommand{\Sem}{\ensuremath{\mathcal{S}}}
\newcommand{\Terms}{\ensuremath{\mathrm{Terms}}}
\newcommand{\Sorts}{\ensuremath{\mathrm{Sorts}}}
\newcommand{\Ops}{\ensuremath{\mathrm{Ops}}}
\newcommand{\Gam}{\ensuremath{\Gamma}}
\newcommand{\Gamns}{\ensuremath{\Gamma_{\mathrm{ns}}}}
\newcommand{\dom}{\ensuremath{\mathrm{dom}}}
\newcommand{\Names}{\ensuremath{\mathrm{Names}}}

\newcommand{\proj}[1]{\ensuremath{\llbracket #1 \rrbracket}}
\newcommand{\Ginst}[1]{\ensuremath{#1[\Gam]}}
\newcommand{\Lang}[1]{\ensuremath{L(#1)}}
\newcommand{\esc}[1]{\ensuremath{\llbracket #1 \rrbracket}}
\newcommand{\slotname}[1]{\ensuremath{\texttt{\%#1\%}}}
\newcommand{\queryslot}{\ensuremath{\mathrm{query}_{\mathrm{slot}}}}
\newcommand{\renderalt}{\ensuremath{\mathrm{render\_alternation}}}

\newcommand{\pol}{\ensuremath{\pi}}
\newcommand{\refn}{\ensuremath{\sqsubseteq}}
\newcommand{\palign}{\ensuremath{\asymp}}
\newcommand{\scopesafe}{\ensuremath{\mathrm{ScopeSafe}(\Gam)}}
\newcommand{\refs}[1]{\ensuremath{\mathrm{refs}(#1)}}
\newcommand{\wallop}{\ensuremath{\tau_{\Gam}}}
\newcommand{\freebits}{\ensuremath{\mathrm{free\_bits}}}

\newcommand{\gproj}{\textsc{gproj}}
\newcommand{\inductor}{\textsc{TemplateInductor}}

\newcommand{\sort}{\ensuremath{s}}

\lstdefinestyle{gbnf}{basicstyle=\ttfamily\small,breaklines=true,
  keywordstyle=\bfseries,columns=fullflexible,showstringspaces=false,
  aboveskip=4pt,belowskip=2pt}
\newif\ifincludeappendix
\includeappendixtrue
\title{Decode-Time Grammars}
\subtitle{Constrained LLM Generation over a Refinement Order of Grammar Fragments}

\author{Shuoming Zhang}
\affiliation{\institution{SKLP, Institute of Computing Technology, Chinese Academy of Sciences}\city{Beijing}\country{China}}
\email{zhangshuoming21s@ict.ac.cn}

\author{Ruiyuan Xu}
\affiliation{\institution{SKLP, Institute of Computing Technology, Chinese Academy of Sciences}\city{Beijing}\country{China}}
\email{xuruiyuan23s@ict.ac.cn}

\author{Haofeng Li}
\affiliation{\institution{SKLP, Institute of Computing Technology, Chinese Academy of Sciences}\city{Beijing}\country{China}}
\email{lihaofeng@ict.ac.cn}

\author{Qiuchu Yu}
\affiliation{\institution{SKLP, Institute of Computing Technology, Chinese Academy of Sciences}\city{Beijing}\country{China}}
\email{yuqiuchu19@mails.ucas.ac.cn}

\author{Yangyu Zhang}
\affiliation{\institution{SKLP, Institute of Computing Technology, Chinese Academy of Sciences}\city{Beijing}\country{China}}
\email{zhangyangyu19b@ict.ac.cn}

\author{Chunwei Xia}
\affiliation{\institution{University of Leeds}\city{Leeds}\country{United Kingdom}}
\email{C.Xia@leeds.ac.uk}

\author{Xiaobing Feng}
\affiliation{\institution{SKLP, Institute of Computing Technology, Chinese Academy of Sciences}\city{Beijing}\country{China}}
\email{fxb@ict.ac.cn}

\author{Chenxi Wang}
\affiliation{\institution{SKLP, Institute of Computing Technology, Chinese Academy of Sciences}\city{Beijing}\country{China}}
\email{wangchenxi@ict.ac.cn}

\author{Huimin Cui}
\affiliation{\institution{SKLP, Institute of Computing Technology, Chinese Academy of Sciences}\city{Beijing}\country{China}}
\email{cuihm@ict.ac.cn}

\author{Jiacheng Zhao}
\authornote{Corresponding author.}
\affiliation{\institution{SKLP, Institute of Computing Technology, Chinese Academy of Sciences}\city{Beijing}\country{China}}
\email{zhaojiacheng@ict.ac.cn}

\authorsaddresses{Authors' contact information: Shuoming Zhang,
zhangshuoming21s@ict.ac.cn; Jiacheng Zhao (corresponding author),
zhaojiacheng@ict.ac.cn. SKLP, Institute of Computing Technology, Chinese
Academy of Sciences, Beijing, China.}

\begin{document}
\begin{abstract}
Large language models now write a growing share of the world's code,
increasingly inside agents and serving systems that compile, execute, or
dispatch generated code without line-by-line review. This works well for
mainstream languages but remains brittle for low-resource programming surfaces
such as domain-specific languages, custom library APIs, and command-line
tools. Even under grammar-constrained decoding, a model can still produce
references invalid in the current environment: a buffer never declared, a
column absent from the schema, a function the library does not provide, or an
unsupported CLI option.

This paper introduces \emph{decode-time grammars}: grammar fragments
instantiated during generation from a runtime environment $\Gam$. A
region-specific policy $\pol(\sort,\Gam)$ selects a fragment for each hole,
and $\tau_\Gam$ replaces open reference positions with $\Gam$-typed slots
whose candidates are exactly the names, fields, APIs, or options available at
that point. Newly generated declarations enter $\Gam$ before later regions are
decoded, so the constraining grammar can depend on the prefix already
generated. This ensures not only grammatical correctness but also semantic
correctness, by preventing references to undefined symbols.

We formalize grammar fragments as environment-indexed grammars ordered by
refinement, prove No-Ghost soundness for $\Gam$-slotted fragments, show that
refinement preserves this support-set guarantee, and characterize the boundary
of mask-enforceable properties. We implement the approach in \gproj{} with
offline grammar induction and online policy resolution. Across TileLang, SQL,
and P4, with models from 0.6B to 236B parameters, \gproj{} eliminates ghost
references by construction at moderate overhead over standard constrained
decoding.

\end{abstract}

\maketitle
\renewcommand{\shortauthors}{Zhang et al.}
\section{Introduction}
\label{sec:intro}

Large language models have become a practical way to write code. They complete
programs in mainstream languages fluently \cite{codex2021arxiv}, and they
increasingly do so inside agents and serving stacks where generated code may
reach a compiler, a database, or a shell before a human reviews each line
\cite{swebench2024iclr,vllm2023sosp}. In this regime, validity is not only a
matter of model quality. It is also a matter of what the generation procedure
can rule out.

This paper studies a recurring failure in low-resource programming surfaces:
references to objects that do not exist in the current environment. These
surfaces include tensor-kernel DSLs such as TileLang and Triton
\cite{tilelang2025arxiv,triton2019mapl}, compiler IRs \cite{mlir2021cgo},
data-plane languages such as P4 \cite{p4lang2014ccr}, custom library APIs, and
command-line tools. A representative TileLang failure is
\texttt{T.gemm(ictensor\_q\_newQ, ...)}: the first operand names a buffer that
was never declared. The line is syntactically plausible, but
\texttt{@tilelang.jit} fails when resolving the name. We call such an
environment-invalid reference a \textbf{ghost reference} (\autoref{fig:motivation}, left): an undefined name to
the PL reader, and a mechanically decidable subclass of hallucination to the LLM
literature \cite{ji2023halluc}.\footnote{This is unrelated to ghost state in
program verification. Verification ghosts exist for proofs but not at runtime;
our ghosts appear in the generated program but name nothing.}

Ghost references are not simply typos. They often arise from negative transfer:
the model writes the nearby dialect, API version, or tool interface it expected,
rather than the one present in the target environment. For example, when asked
for a TileLang kernel, DeepSeek-V4-Flash \cite{deepseekv4arxiv} repeatedly
reaches for TVMScript's \texttt{T.Buffer}, a neighboring spelling deprecated in
TileLang for \texttt{T.Tensor} (\autoref{sec:eval-strongmodel}). This kind of error
can persist even as models become more capable, because the wrong continuation
is a fluent and high-probability one.

Grammar-constrained decoding is the natural place to look for a remedy. Existing
engines mask the next-token distribution against a context-free grammar, so only
grammatically legal continuations remain in the support set
\cite{xgrammar2025mlsys,syncode2025tmlr,outlines2023arxiv,gcd2023emnlp}. This
guarantees syntactic validity with respect to the given grammar. However, a fixed
grammar usually treats reference positions as open classes: a kernel operand is
an \texttt{identifier}, a SQL column is a \texttt{name}, and a command-line flag
is an \texttt{option}. Such productions admit both valid and invalid references.
The missing condition is environment-dependent: the name must be in scope, the
column must be in the schema, the function must exist in the library version, or
the option must be supported by the current tool. Each of these facts is available at generation time; none is expressible by a
grammar fixed independently of the environment.

\begin{figure}[t]
  \centering
  \includegraphics[width=\columnwidth]{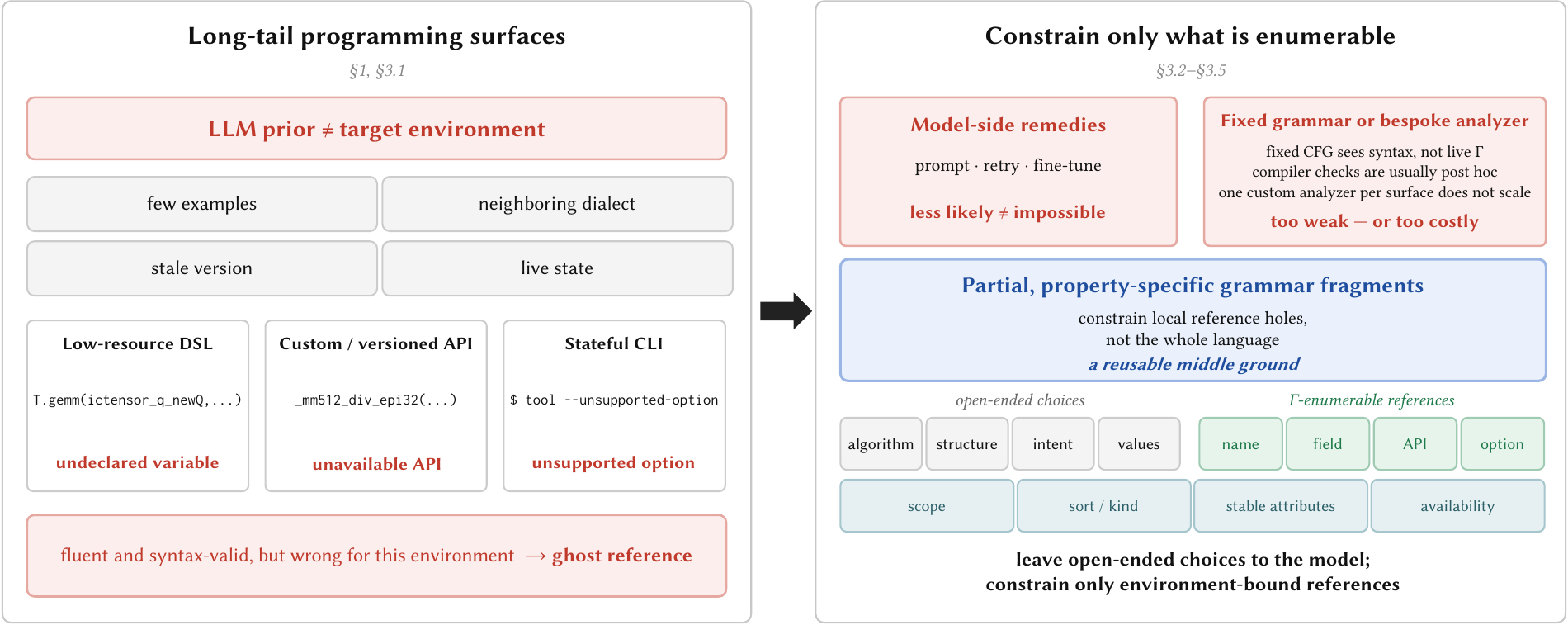}
  \caption{Motivation and design insight. Long-tail programming surfaces can
  produce syntax-valid but environment-invalid ghost references. Partial,
  property-specific grammar fragments constrain only environment-bound
  references, while leaving open-ended program choices to the model.}
  \label{fig:motivation}
\end{figure}

Our key move is to make the evolving generated prefix an input to the
constraint itself. A \emph{decode-time grammar} is a grammar fragment
instantiated during generation from a runtime environment $\Gam$. The current
prefix determines $\Gam$, while the expected sort of the next hole determines
which environment facts are relevant. The key operator $\wallop$ then replaces
an open reference position with a $\Gam$-typed slot whose candidates are exactly
the names, fields, APIs, or options available at that point. The resulting
fragment specializes the recognizer before the hole is decoded. Unlike
prompting, fine-tuning, or reweighting, which at best make the ghost unlikely
while leaving it sampleable, this specialization removes it from the support set.

Unlike conventional approaches that generate source code under a fixed language definition, our approach incorporates environments into the code-generation process itself: generated
declarations extend $\Gam$, an LLM sketch exposes the next typed hole, a
per-region policy selects a grammar fragment, and that fragment is
JIT-instantiated from the current $\Gam$ into the token-level recognizer. The
result therefore acts directly on the support of the continuation,
whose emitted declarations may in turn update the analysis state for later
regions.

We instantiate grammar fragments at decode time from the evolving environment $\Gam$. Declaration regions extend $\Gam$with newly generated names, and later reference regions are constrained by the updated support set. This runtime construction admits exactly the references available in the current environment, ruling out ghost references while avoiding the blocking that arises when declarations are fixed in advance. Thus, decode-time grammar instantiation provides a support-set guarantee that previous fixed-support precompiled grammar families cannot achieve without sacrificing non-blocking generation.

We separate environment-bound correctness from open-ended program decisions (\autoref{fig:motivation}, right). Decode-time grammars enforce the former by ensuring that references are drawn from the current $\Gam$, thereby eliminating mechanical reference errors that are enumerable from the environment. The model remains responsible for choices that require program understanding, such as the algorithm, tiling strategy, query intent, and command sequence. This division of labor is capability-orthogonal: improving the model can improve these open-ended choices without weakening the support-set guarantee enforced by the mask. Thus, decode-time grammars provide a precise, stable slice of correctness rather than a claim of whole-program correctness.

One programming construct may be exposed through grammar fragments of different
strengths. For the same \texttt{T.copy} operand position, one fragment may admit
an arbitrary identifier, another may admit only names in $\Gam$, a third may
filter those names by sort or shape, and a fourth may pin one particular
operand. These fragments preserve the surrounding semantic construct while
exposing different amounts of structure to the mask. We model such fragments as
semantic-to-syntax projections into a shared semantic domain $\Sem$
\cite{krogmeier2023oopsla,hazel2017popl}, ordered by refinement $\refn$. A
per-region policy $\pol(\sort,\Gam)$ selects the fragment to use at each hole.
Tight fragments provide stronger guarantees; coarser fragments preserve freedom
or provide fallback when a stronger property is unavailable. This order, and
the $\wallop$ edge where the no-ghost guarantee switches on, are summarized in
\autoref{fig:paper-map}.

We implement the idea in \gproj{}, an online masked executor that decodes across
nested region holes while maintaining $\Gam$, and in \inductor{}, an offline
inductor that constructs grammar fragments from small corpora under a validation
gate. Our evaluation spans low-resource DSLs, custom library APIs, and
command-line tools, with models ranging from 0.6B parameters to a 236B frontier
model. On TileLang, SQL, and P4, $\Gam$-typed slots turn reference safety from a
model-dependent outcome into a construction-level guarantee, while preserving
the model's freedom to choose the surrounding program. On custom library usage,
the induced gates reject invented functions, intrinsics, and build options. For
command-line tool use, the tool's own self-description and live state provide
$\Gam$, making unsupported actions absent from the support set. Across model
sizes, the guarantee transfers with the mask: stronger models improve the
semantic sketch, while the environment-bound reference layer remains fixed.
Across evaluations on TileLang, SQL, and P4 with models
ranging from 0.6B to 236B parameters, every $\Gam$-typed arm is ghost-free
by construction; on SQL, \gproj{} improves execution match from
$76\%$ to $100\%$. This guarantee incurs moderate
serving cost: relative to XGrammar, \gproj{} reduces throughput by
10.6--17.8\%, while the end-to-end reduction relative
to unconstrained decoding averages $17.3\%$
.

\paragraph{Contributions.}
\begin{itemize}\setlength{\itemsep}{2pt}
\item We introduce \textbf{decode-time grammars}: grammar fragments
specialized from a runtime environment $\Gam$ while an LLM-generated program is
still being constructed. The operator $\wallop$ turns open reference positions
into $\Gam$-typed slots, compiling prefix-dependent analysis facts directly
into the decoder's support set and removing ghost references by construction.

\item We formalize grammar fragments as \textbf{semantic-to-syntax projections}
ordered by refinement. We prove No-Ghost soundness for $\Gam$-slotted
fragments, show that refinement preserves this support-set guarantee, and
characterize the boundary of mask-enforceable properties (\autoref{sec:formalization}).

\item We prove a necessity result: no finite family of precompiled grammars
with fixed reference supports, under any prefix-reading dispatch, can be both
ghost-free and non-blocking. Thus exact $\Gam$-dependent reference support must
be synthesized during decoding; grammar instantiation is our direct realization
of this necessary runtime object.

\item We build \gproj{}, an online masked executor, and \inductor{}, an offline
inductor for constructing validated grammar-fragment policies. Across
low-resource DSLs, custom library APIs, command-line tools, and models from
0.6B to 236B parameters, the system demonstrates a transferable division of
labor: the LLM supplies the semantic sketch, while the decode-time grammar
guarantees environment-bound references. This yields oracle-checked
improvements in generation quality, with reference safety transferring across
model sizes and programming surfaces.
\end{itemize}

\paragraph{What is new.}
Grammar-masked decoding conventionally supplies a recognizer to a masking
backend, while prior systems already enrich generation with prefix-dependent
completion engines, runtime-populated parsers, and programmable semantic
properties \cite{synchromesh2022iclr,top2025colm,chopchop2026popl}.
Our contribution is an
\emph{environment-indexed grammar policy} in which the generated prefix itself
drives construction of the recognizer used for its continuation. The fragment
library is induced offline from corpus and schema evidence; online, an LLM
sketch identifies the next typed analysis boundary, live-$\Gam$ facts specialize
the selected fragment, and the resulting recognizer is compiled into the token
mask before that region is emitted. Analysis therefore acts not only as a
checker or completion oracle, but as part of the source-construction procedure.

The refinement order makes the space of such recognizers explicit: the policy
can select stronger or weaker fragments per region, track where guarantees
become active, and fall back when a stronger analysis is unavailable.
\autoref{sec:related-work} positions the closest works. Synchromesh establishes prefix-derived masks;
Tree-of-Parsers demonstrates runtime-populated semantic
slots; Krogmeier and Madhusudan make grammar spaces objects of synthesis
\cite{krogmeier2023oopsla}; and
ChopChop supplies programmable semantic realizability. We add a refinement-ordered and inducible
grammar policy, its online specialization from an evolving program environment,
and an explicit support-set guarantee with a necessity result: no
fixed-support family provides the same ghost-free, non-blocking behavior
(Prop.~3). Building on XGrammar-style JIT masking
\cite{xgrammar2025mlsys}, we introduce this runtime-$\Gam$ analysis layer.
Symbol-table scope is the first concrete instance; the broader abstraction is
\emph{JIT construction of an environment-indexed recognizer for generated code}.

\section{Background}
\label{sec:background}

This section fixes the vocabulary used throughout the paper: token masks,
runtime environments, holes, and the induction primitives used to construct
grammar fragments. Related work is discussed in \autoref{sec:related-work}.

\subsection{Token masking and the support set}
\label{sec:bg-masking}

An autoregressive model decodes one token at a time from a distribution over
vocabulary $V$ \cite{vaswani2017attention}. Before sampling,
\emph{constrained decoding} uses a formal constraint, such as a regex or CFG,
to assign zero probability to tokens that cannot legally continue the current
prefix. The admitted tokens form the \textbf{support set}, represented by a
boolean \textbf{mask}. Unlike prompting, fine-tuning, or reweighting, which only
change token probabilities, masking removes continuations from the support.
Thus every guarantee in this paper concerns what the decoder can emit, not what
the model prefers.

In practice, grammar masks are implemented by engines such as XGrammar, SynCode,
and Outlines \cite{xgrammar2025mlsys,syncode2025tmlr,outlines2023arxiv}. The
engine maintains a parser state for the prefix and intersects the grammar's
legal next terminals with the tokenizer vocabulary. If the engine is exact, any
completed string emitted under grammar $G$ lies in $\Lang{G}$ by construction.
This is a guarantee about the decoding procedure, not about termination or
semantic correctness of the generated program.

We use \textbf{\freebits} as a post-hoc diagnostic of mask freedom:
$\freebits=\sum_t \log_2 |M_t|$, where $M_t$ is the set of tokens admitted by
the mask at step $t$ in a completed decode. It is an explanatory property of the
realized mask trace, not an optimization objective or a property of model
probabilities: fixed strings have $\freebits=0$, while unconstrained regions
approach $\log_2|V|$ per token.

\subsection{From fixed grammars to environment-bound constraints}
\label{sec:bg-lineage}

Masking against a fixed grammar is now a standard way to enforce structured
generation for off-the-shelf models \cite{gcd2023emnlp}. It also appears behind
search-side or declarative front-ends such as PICARD and LMQL
\cite{picard2021emnlp,lmql2023pldi}. Our executor builds on this substrate.

Several lines of work extend fixed-syntax masking by adding semantic checks,
reshaping the distribution, or repairing invalid generations after the fact
\cite{synchromesh2022iclr,typeconstrained2025pldi,chopchop2026popl,
top2025colm,grammaraligneddecoding2024neurips,smc-llm-control2025iclr,
itergen2025iclr}. Our focus is different: when a reference position has a finite
candidate set determined by the current environment, we remove invalid
candidates from the support set before sampling. \autoref{sec:related-work} compares these approaches
in detail.

\subsection{Runtime environments across DSLs, libraries, and tools}
\label{sec:bg-edsl}

Our primary setting is generation for low-resource, environment-bound
programming surfaces. In embedded DSLs such as TileLang and Triton
\cite{tilelang2025arxiv,triton2019mapl}, a generated region is interpreted
inside a host program, so valid references depend on buffers and values declared
earlier. In SQL over Spider schemas \cite{spider2018emnlp}, valid column and
table names are determined by the database schema. In P4 programs
\cite{p4lang2014ccr}, valid fields depend on declarations in the current packet
processing context.

We write the runtime environment as $\Gam$. Depending on the surface, $\Gam$
contains in-scope names, sorts, shapes, schema entries, API members, command-line
flags, or tool state. The common feature is that well-formedness depends on
membership in $\Gam$: a name must have been declared, a column must exist, an API
must be available, or an option must be supported. This condition is not captured
by a fixed CFG production such as \texttt{identifier}; Remark~1 formalizes this
point for declaration consistency.

The same abstraction applies beyond DSLs. For library-API usage, $\Gam$ is the
target library's API surface: functions, intrinsics, enum constants, options, and
version-specific names. For command-line tool use, $\Gam$ is supplied by the
tool's self-description together with mutable state such as branches, build
targets, scripts, and available subcommands \cite{react2023iclr,toolformer2023neurips}.
The three settings differ in where $\Gam$ comes from, but the invalid-reference
failure mode and the decode-time remedy are the same.

\subsection{Holed terms, templates, and induction primitives}
\label{sec:bg-holes}

A partially written program is a term with holes. We use the standard vocabulary
of typed holes and sketch-based synthesis
\cite{hazel2017popl,sketch2006asplos}: a \textbf{template} is a holed term, and
each \textbf{hole} has an expected \textbf{sort} $\sort$. In our setting, holes
are filled by masked decoding rather than by human editing or solver search.

The offline component constructs grammar fragments from small corpora using two
standard primitives. First, \emph{anti-unification} computes least general
generalizations for aligning examples
\cite{plotkin1970generalization,babble2023popl}. Second, grammar induction is
retargeted from language approximation
\cite{arvada2021ase,hygenar2025acl,grammarprompting2023neurips} to fragments
that are executable by the masking engine and can be instantiated from $\Gam$.
The resulting fragments are validated against the online executor before use
(\autoref{sec:construction}).

\begin{figure*}[t]
  \centering
  \includegraphics[width=\textwidth]{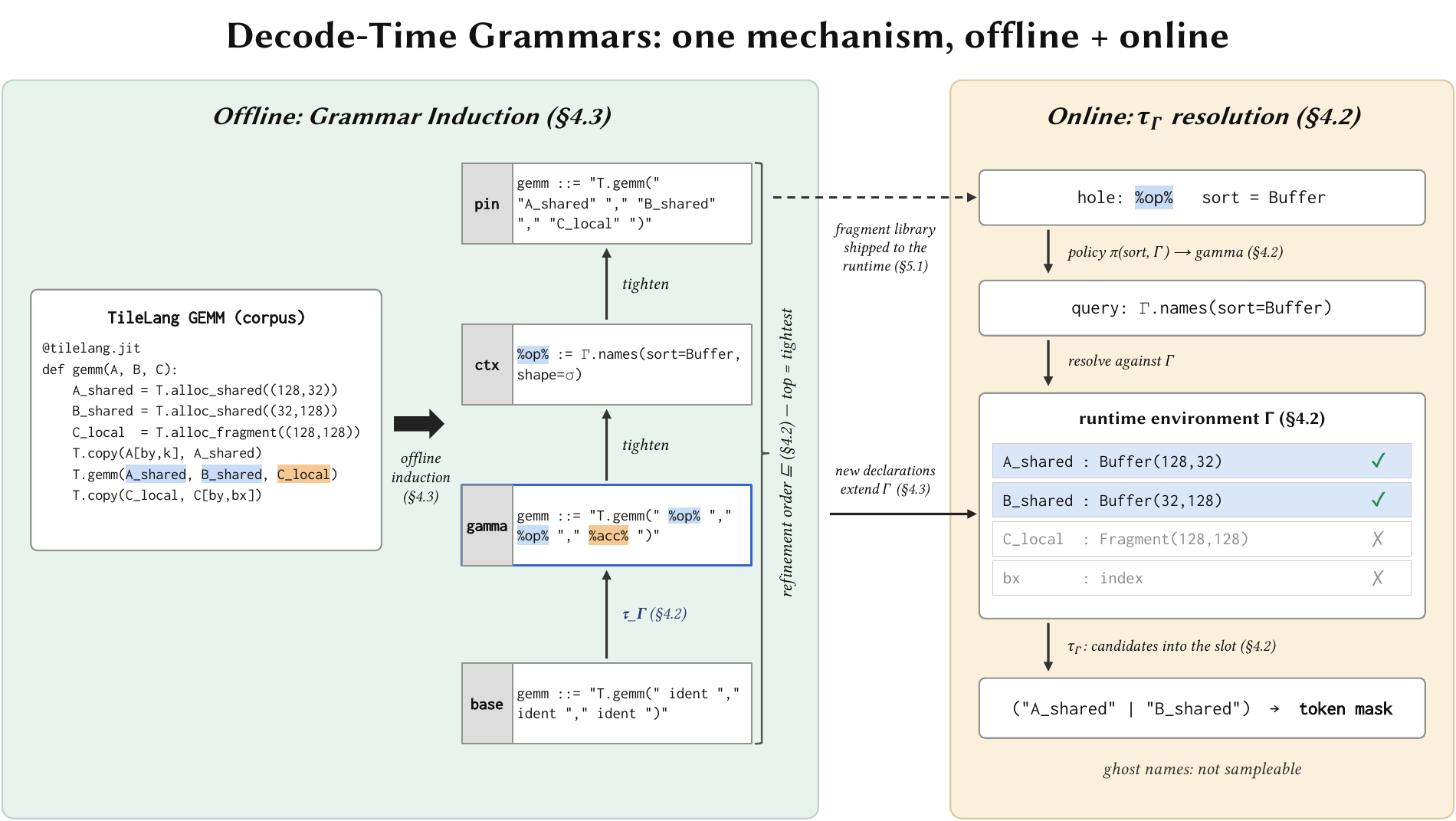}
  \caption{Decode-time grammars, end to end. Left: the offline inductor (\autoref{sec:construction})
  lowers a TileLang corpus into a ladder of grammar fragments ordered by
  refinement (\autoref{sec:sdcg}); colored slots mark the operand holes they came from, and
  \wallop{} is the edge where the no-ghost guarantee switches on. Right:
  at decode time the policy \pol{} picks a rung, and \wallop{} tightens its open
reference positions into slots instantiated from the runtime environment $\Gam$ --- only \texttt{Buffer}-sort names survive
  into the token mask; ghost names are not sampleable (Theorem~1).}
  \label{fig:paper-map}
\end{figure*}

\section{Motivation}
\label{sec:motivation}

This section motivates decode-time grammars through one recurring failure class:
references that are syntactically valid but invalid in the current environment.
We first show the failure on TileLang and SQL
(\autoref{sec:motivation:failure}), then explain why fixed grammars and
model-probability remedies do not remove it
(\autoref{sec:motivation:fixed-grammar}). The fix exposes the two design objects
used later: a refinement order of grammar fragments
(\autoref{sec:motivation:many-grammars}) and a division of labor between mask
guarantees and model choice
(\autoref{sec:motivation:freedom}--\autoref{sec:motivation:boundary}).
\autoref{fig:trace-cliff} shows the key move on one real decode.

\begin{figure*}[t]
  \centering
  \includegraphics[width=0.88\textwidth]{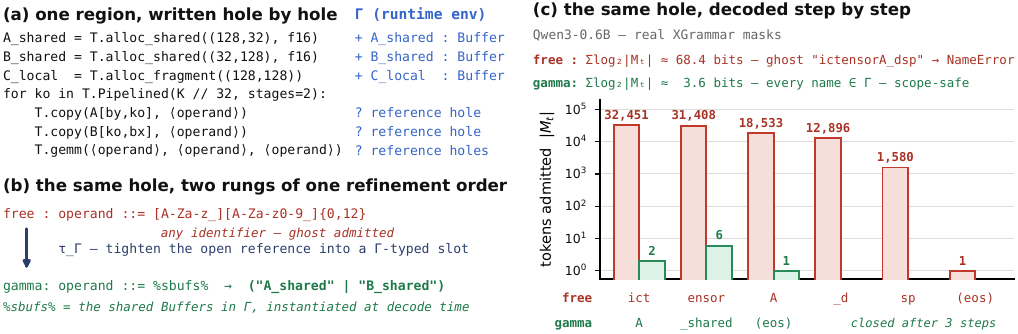}
  \caption{The \wallop{} move on one real decode (Qwen3-0.6B, real XGrammar
  masks; declaration holes are pinned for presentation, reference holes are
  decoded freely). \textbf{(a)} A GEMM region is generated hole by hole; each
  declaration extends $\Gam$ before later holes are instantiated. \textbf{(b)}
  The same reference hole at two rungs: an open-identifier fragment
  (\texttt{free}) and a $\Gam$-typed slot (\texttt{gamma}). \textbf{(c)} Under
  \texttt{free}, the mask admits 32{,}451 first-step tokens and the model emits
  the ghost \texttt{ictensorA\_dsp}; under \texttt{gamma}, the support contains
  only declared buffers, so every emitted name is scope-safe by construction.}
  \label{fig:trace-cliff}
\end{figure*}

\subsection{The failure: negative transfer, not ignorance}
\label{sec:motivation:failure}

Consider Qwen3-0.6B, a deliberately small and locally deployable model, asked to
generate the inner body of a TileLang GEMM kernel \cite{tilelang2025arxiv}. When
matrix-multiply operands are decoded as free identifiers, a representative output
is:

\begin{lstlisting}[language=Python]
# inside an @tilelang.jit kernel; A_shared, B_shared, C_local were declared above
T.gemm(ictensor_q_newQ, B_shared, C_local)
\end{lstlisting}

The line is syntactically valid: it parses, and a generic
\texttt{buffer ::= identifier} grammar accepts it. The error is environmental:
\texttt{ictensor\_q\_newQ} was never declared, so \texttt{@tilelang.jit} raises
\texttt{NameError}. In this tier, every generated program fails to compile, and
each failure has the same form. The same pattern appears in SQL. For Spider
queries over a fixed schema \cite{spider2018emnlp}, the model can emit
\texttt{SELECT nonexistent\_col FROM singer ...}: well-formed SQL that fails
because the column is absent. On our difficulty-controlled subset, with the
column position decoded as a free identifier, 90/90 generated queries contain a
ghost column (\autoref{sec:evaluation}).

These failures are not caused only by lack of data. A stronger model may know a
nearby dialect better and transfer it to the wrong target
\cite{negtransfer2005nips}. In our TileLang
setting, an unconstrained frontier model reaches for TVMScript's
\texttt{T.Buffer} on every sample, although TileLang uses \texttt{T.Tensor}
(\autoref{sec:eval-strongmodel}). The program looks fluent, but for the wrong
environment. This is the narrow failure class we target: the model often has the
right high-level intent, but names the wrong concrete object.

The class extends beyond variables. A ghost reference may be a missing intrinsic
(\texttt{\_mm512\_div\_epi32}; \autoref{sec:eval-invariance}), an undeclared P4 field, a wrong keyword
argument, a version-specific API, or an unsupported command-line option. The
family is not anecdotal: naming-class hallucination is a top category in
execution-verified taxonomies \cite{codehalu2025aaai}, and hallucinated
\emph{package} names recur predictably enough to be weaponized as a
supply-chain attack \cite{packagehalluc2025usenix}. The
common test is whether the legal candidates can be enumerated from $\Gam$: prior
declarations, a schema, an API surface, a version, or tool state. Positions that
do not pass this test---argument values, call protocols, numerical correctness,
termination---fall outside our guarantee and remain the model's responsibility
(\autoref{sec:motivation:boundary}; formal boundary in \autoref{sec:construction}).

\subsection{Why neither a fixed grammar nor a prompt clears it}
\label{sec:motivation:fixed-grammar}

Tightening an ordinary CFG does not solve the problem. The rule
\texttt{buffer ::= identifier} already captures the surface shape of a buffer
reference. What it cannot express is that the identifier must name a buffer
declared earlier in this kernel. That condition depends on the generated prefix
and is not context-free (Remark~1, \autoref{sec:formalization}).

Model-probability remedies have the complementary limitation. Prompting,
fine-tuning, in-context instructions, self-repair \cite{selfrepair2024iclr},
and reweighting can make a
legal continuation more likely, but they do not remove an illegal one from the
support. An instruction such as ``use \texttt{T.Tensor}, not \texttt{T.Buffer}''
can reduce the error rate, and in-context examples can bias the model toward the
target dialect \cite{fewshot2020neurips}; neither provides a guarantee. As the environment description
grows---more buffers, more API rules, more version constraints---the relevant
constraint competes for attention with the rest of the prompt, and the model may
still sample a fluent but invalid continuation. Against a strong wrong prior, such as a nearby dialect spelling, a weight-axis
remedy must overcome the model's preference. A mask-side remedy does not:
removing \texttt{T.Buffer} or an undeclared buffer from the support costs the
same regardless of the model's probability for it.

The decode-time fix is to instantiate the reference production from $\Gam$ when
the reference is generated. If $\Gam$ currently contains
\texttt{A\_shared}, \texttt{B\_shared}, and \texttt{C\_local}, the operand
production becomes a finite choice among those names. This step is \wallop{}:
an open reference position becomes a $\Gam$-typed slot. In
\autoref{fig:trace-cliff}, the open identifier admits a ghost, while the
$\Gam$-typed slot admits only declared buffers. In evaluation, toggling this one
edge is the single-variable ablation behind the no-ghost cliff: scope-safety
moves from 0\% to 100\% at the $\Gam$-typed rung (\autoref{sec:evaluation}). The claim is narrow:
$\Gam$-slots rule out ghost references, while semantic correctness is checked by
external oracles.

\subsection{One DSL, many reasonable grammars}
\label{sec:motivation:many-grammars}

The previous subsection did not identify one correct grammar. It identified a
tighter fragment, useful because $\Gam$ was available. This is the structure we
use throughout: one DSL can expose the same semantic construct through fragments
of different strengths. For example, a TileLang copy can be written as:

\begin{lstlisting}[language={}]
T.copy(src, dst)                       # API-as-call:   generic function-call grammar; loose
copy src -> dst                        # API-as-syntax: operation lifted to dedicated syntax
CopyStmt(src: Buffer, dst: Buffer)     # pinned:        typed semantic slots exposed
\end{lstlisting}

These fragments can project to the same operation in $\Sem$, while exposing
different amounts of structure to the mask. Which fragment is useful depends on
the region being generated. A high-level skeleton should preserve model freedom;
a reference slot should be tied to $\Gam$; a field or option may be pinned to a
small enumeration. Thus the object we need is not a single grammar, but a policy
over fragments.

We order these fragments by refinement. Moving downward accepts fewer surfaces
and exposes more structure; moving upward provides fallback when a stronger
fragment is unavailable. In this order, \wallop{} is the edge where the no-ghost
guarantee turns on. Once a reference position is below that edge, further
tightening cannot re-admit an undeclared name; \autoref{sec:formalization} proves this as a refinement
preservation property.

\subsection{Two axes of freedom}
\label{sec:motivation:freedom}

A fixed recognizer makes constrained decoding look one-dimensional: looser or
tighter. Program generation needs two axes. One is semantic or structural
freedom: which algorithm to write, which loops to use, what tiling strategy to
choose, or what query intent to express. The other is surface or reference
freedom: which buffer, column, API token, field, or command option is available
in the current environment.

The desired point is not the tightest possible grammar. A fully pinned template
can remove ghosts, but it also removes the model's freedom to choose the
surrounding program. Decode-time grammars instead leave semantic structure open
where possible and tighten only environment-bound reference positions. The
No-Ghost theorem in \autoref{sec:formalization} is stated at this level: it does not require a fixed
program template, only that reference positions selected by the policy are
realized as $\Gam$-typed slots.

Using the post-hoc diagnostic from \autoref{sec:bg-masking}, the GEMM ladder shows
the intended pattern: tightening lowers \freebits{}, while scope-safety jumps at
the \wallop{} edge. Reference freedom is reduced where $\Gam$ can enumerate the
legal choices; structural freedom remains with the model. \autoref{sec:eval-wall}
reports the full tier-by-tier measurement; \autoref{sec:eval-sql} repeats the
analysis on SQL.

\subsection{The boundary as a division of labor}
\label{sec:motivation:boundary}

The boundary is a design choice, not a disclaimer. Decode-time grammars enforce
the part of correctness expressible as membership in $\Gam$: names in scope,
schema columns, declared fields, available APIs, and supported options. They do
not enforce semantic intent, functional correctness, numerical accuracy,
call-order protocols, or termination. Those properties require program
understanding or execution and remain with the model, with external oracles used
only for evaluation.

This split matches the empirical profile. Negative transfer is especially
damaging at reference positions: a nearby but invalid spelling can be fluent and
high probability. Model capability helps with open-ended semantic and structural
choices, as the steady improvement on execution-verified benchmarks---
SWE-bench's repository fixes, KernelBench's GPU kernels, and Terminal-Bench's
command-line sessions---illustrates
\cite{swebench2024iclr,kernelbench2025arxiv,terminalbench2025}. These
improvements, however, carry no support-set guarantee, and fluent models still
produce both semantic errors and environment-invalid references. We target the
latter and, more generally, the mechanically enumerable constraints expressible
by $\Gam$. Our own data exhibits the same separation at both ends of the scale:
the unconstrained 0.6B model already reaches 76\% execution match on our SQL
subset (\autoref{sec:eval-sql}), while the frontier model sketches plausible
TileLang programs but fails on the target API surface
(\autoref{sec:eval-strongmodel}). The mask handles the mechanical, enumerable
error class; the model handles open-ended program choice.

The same split changes both sides of the pipeline. The producer builds an order
of grammar fragments rather than a single recognizer, and the consumer decodes
under a per-region policy $\pol(\sort,\Gam)$ rather than under one fixed grammar.
The next section makes these objects precise and proves the support-set
guarantees they provide.

% \begin{lstlisting}[language=Python]
% # inside an @tilelang.jit kernel; A_shared, B_shared, C_local were declared above
% T.gemm(ictensor_q_newQ, B_shared, C_local)
% \end{lstlisting}

% \begin{lstlisting}[language={}]
% T.copy(src, dst)                       # API-as-call:   generic function-call grammar; loose
% copy src -> dst                        # API-as-syntax: operation lifted to dedicated syntax
% CopyStmt(src: Buffer, dst: Buffer)     # pinned:        typed semantic slots exposed
% \end{lstlisting}

\section{Formalization and Construction}
\label{sec:formalization}

This section develops decode-time grammars in full, following the two panels
of \autoref{fig:paper-map}: \autoref{sec:formal-prelim} fixes the objects; \autoref{sec:sdcg} treats one hole,
a single fragment decoded under runtime semantic information, with its
guarantees; \autoref{sec:construction} composes holes into whole programs, induces the fragments
offline, and shows why exact reference support must be synthesized online; \autoref{sec:clients} shows the
induced grammars on three client surfaces.

In one sentence: the executor builds the symbol table and resolves against it
\emph{inside} the decoding loop --- declarations extend $\Gam$ as they are
generated, references resolve against the current $\Gam$ through the mask ---
so scope analysis runs synchronously with generation, and scope safety holds
by construction (Theorem~1, Corollary~2) rather than by later checking.

The soundness results establish reference safety in the decoder's support set,
over the name-to-sort layer of $\Gam$. Compilation and functional behavior are
orthogonal properties measured by the external oracles in \autoref{sec:evaluation}.

\subsection{Preliminaries}
\label{sec:formal-prelim}

\begin{table}[!t]
\caption{Notation used in the formalization.}
\label{tab:notation}
\centering
\scriptsize
\setlength{\tabcolsep}{4pt}
\resizebox{\textwidth}{!}{
\begin{tabular}{@{}llll@{}}
\toprule
$\Sigma$, $\Sigma^*$ & tokenizer vocabulary; token strings & $\Gam$ & runtime environment $\Names \rightharpoonup (\mathit{sort},\mathit{attrs})$ \\
$\sort$; $\Sorts$, $\Ops$ & a sort; sort and operator sets of $\Sem$ & $\Gamns$; $\dom(\Gam)$ & active name$\to$sort snapshot; declared-name set \\
$\Sem=(\Sorts,\Ops)$ & semantic domain (many-sorted term algebra) & \texttt{\%slot\%}; $\queryslot$ & a $\Gam$-slot; its query $\Gam\to\mathcal{P}_{\mathrm{fin}}(\Names)$ \\
$\mathtt{Buffer}\langle\rho,\mu,\sigma\rangle$ & buffer sort: role $\rho$, memory space $\mu$, shape $\sigma$ & $\Ginst{G}$ & $G$ with each slot instantiated against $\Gam$ \\
$\Terms_s(\Sem)$ & well-typed $\Sem$-terms of sort $s$ & $\renderalt(C)$; $\esc{c}$ & escaped, parenthesized alternation of $C$; escaped literal \\
$(\sort,\Gam)$ & a hole: expected sort, current environment & $\Lang{G}$ & surface strings $G$ accepts \\
$G$; $\mathrm{anchor}(G)$ & a grammar fragment; its anchor sort & $\refs{d}$; $\refs{w}$ & names at reference positions (one derivation; all) \\
$\proj{\cdot}_G$ & projection $\Lang{G}\to\Terms_{\mathrm{anchor}(G)}(\Sem)$ & $\scopesafe$; $L_{\mathrm{scope}}$ & scope-safe strings at $\Gam$; declaration-consistent language \\
$\refn$; $\palign$, $\approx_{\Sem}$ & refinement preorder; projection alignment (mod $\approx_{\Sem}$) & $\wallop$ & tightening: an open reference position becomes a $\Gam$-slot \\
$\pol(\sort,\Gam)$ & per-region policy: selects the fragment for a hole & $\freebits$ & $\sum_t \log_2|M_t|$: freedom of one masked decode \\
\bottomrule
\end{tabular}}
\end{table}

The definitions below fix the objects over which the machine-checkable
premises of \autoref{sec:sdcg}--\autoref{sec:construction} are stated; \autoref{tab:notation} collects the
notation. Let $\Sigma$ be the tokenizer vocabulary and $\Sigma^*$ the set of
token strings.

\begin{definition}[Hole and policy]
A \emph{hole} is a pair $(\sort,\Gam)$ awaiting a grammar fragment. The
per-region policy $\pol$ maps $(\sort,\Gam)$ to a fragment. A \emph{leaf}
fragment has no sub-holes; a \emph{composite} fragment is a template whose
sort-anchored sub-holes are resolved by $\pol$, at decode time, to further
fragments (the grammar-call tree of \autoref{sec:construction}).
\end{definition}

Granularity is not a separate notion: a statement-, block-, or
whole-kernel-level fragment differs only in its anchor sort, which is what lets
one program mix constraint tiers per region. In \autoref{fig:paper-map},
each operand position of \texttt{T.gemm} is a hole of sort \texttt{Buffer}.

\paragraph{Semantic domain.}
The semantic domain $\Sem=(\Sorts,\Ops)$ is a many-sorted term algebra
\cite{goguen1977initial}. Sorts include, for example,
$\mathtt{Buffer}\langle\rho,\mu,\sigma\rangle$ for role, memory space, and shape,
as well as $\mathtt{Dim}$, $\mathtt{DType}$, $\mathtt{Stmt}$, and
$\mathtt{Kernel}$. Operators include typed constructors such as
$\mathtt{Gemm}:\mathtt{Buffer}^3\to\mathtt{Stmt}$. We write
$\Terms_s(\Sem)$ for well-typed terms of sort $s$.

$\Sem$ is deeper than surface syntax: whether an operand names a previously
declared, sort-compatible buffer is neither an AST invariant nor context-free
(Remark~1), so scope-safety is a property of generated references at the level
of $\Sem$.

\paragraph{Remark 1 (membership in $\Gam$ is not context-free).}
``Every reference names a prior declaration'' is not CFG-expressible over any
name space with a two-letter subspace: intersecting with a regular declare/use
skeleton reduces it to the copy language $\{w\#w\}$, which is not context-free.
This is the same phenomenon behind Floyd's result for ALGOL~60
\cite{floyd1962algol}. We write $L_{\mathrm{scope}}$ for the
declaration-consistent language; no single fixed CFG recognizes it exactly. Full
argument: Appendix~A.8. This is the first half of necessity; Proposition~3
(\autoref{sec:construction}) closes the other half.

\begin{definition}[Fragment and projection]
A \emph{fragment} is a grammar piece $G$ equipped with an anchor sort
$\mathrm{anchor}(G)$ and a syntax-directed projection
$\proj{\cdot}_G : \Lang{G} \to \Terms_{\mathrm{anchor}(G)}(\Sem)$; for
ambiguous grammars, $\proj{\cdot}_G$ is read along a canonical derivation.
\end{definition}

A given semantic construct may have several fragments projecting to it, such as
\texttt{T.copy(src,dst)}, \texttt{copy src -> dst}, and
\texttt{CopyStmt(src:Buffer,dst:Buffer)}. This multiplicity is what the
refinement order organizes. The formal results use the projection through the
alignment condition in $\refn$; the evaluated ladders witness alignment by
same-carrier projection equality (Appendix~A.7).

\paragraph{Runtime environment.}
The runtime environment presented to a hole is a snapshot $\Gam : \Names \rightharpoonup (\mathit{sort},\mathit{attrs})$. Its name-to-sort projection is $\Gamns$, and $\dom(\Gam):=\dom(\Gamns)$. Bindings are write-once within each scope frame: declarations may extend a live frame, but do not change the sort of an existing binding in that frame. Lexical shadowing is represented later by overlaying distinct frames; consequently, the active snapshot may change when frames are pushed or popped, while each live frame and the persistent global layer grow monotonically. The single-hole soundness results require only the active snapshot at that hole.

\paragraph{$\Gam$-slots and instantiation.}
A fragment may contain slots written \texttt{\%slot\%}. Each slot carries a
query $\queryslot : \Gam \to \mathcal{P}_{\mathrm{fin}}(\Names)$.
A reference slot is well formed if $\queryslot(\Gam)\subseteq\dom(\Gam)$; the
common idiom is a filtered lookup such as
$\Gam.\mathrm{names}(\mathit{sort}=s,\dots)$.

Instantiation replaces each slot by an escaped alternation of its candidates:
$\Ginst{G} := G.\mathit{ebnf}\big[\slotname{slot} \mapsto
\renderalt(\queryslot(\Gam))\big]_{\forall\,\mathrm{slot}}$.
The renderer $\renderalt$ emits either one escaped literal $\esc{c}$ or a
parenthesized alternation $(\esc{c_1}\mid\cdots\mid\esc{c_k})$; empty candidate
sets raise an error. Parentheses and escaping are the only hygiene requirements
used by Lemma~1. The full concrete syntax and escaping definition are in
Appendix~A.1--Appendix~A.2.

\paragraph{References and scope safety.}
Each fragment marks its reference positions. A marked position is realized either
as an open identifier or as a $\Gam$-slot; $\wallop$ changes the former into the
latter. Literal or enumeration slots are not reference positions. Marking adequacy is a premise of Theorem~1: every name-yielding reference position must be marked. The inductor validates this premise against mined ghost substitutions (\autoref{sec:construction}).

For a derivation $d$, let $\refs{d}$ be the set of names yielded at marked
reference positions; for a string $w$, define
$\refs{w} := \bigcup_{d:\,\mathrm{yield}(d)=w}\refs{d}$.
The union over all derivations avoids any unambiguity assumption. The scope-safe
set at environment $\Gam$ is
$\scopesafe := \{\,w\in\Sigma^* \mid \refs{w}\subseteq\dom(\Gam)\,\}$.
For a fixed environment $\Gam$, $\scopesafe$ is the set of strings whose
references are contained in $\dom(\Gam)$. We also use $L_{\mathrm{scope}}$ for
the \emph{declaration-consistent language}: the set of complete strings in which
every reference names an earlier declaration in the same string. Remark~1 and
Proposition~3 concern this prefix-dependent language, not $\scopesafe$ for any
fixed $\Gam$.

\begin{definition}[Refinement]
For fragments $G_1,G_2$ with the same anchor sort,
\[
\begin{aligned}
G_1 \refn G_2
\quad:\Longleftrightarrow\quad&
\mathrm{anchor}(G_1)=\mathrm{anchor}(G_2)
\;\wedge\;
\proj{\cdot}_{G_1}\palign \proj{\cdot}_{G_2}
\\
&\wedge\;
\forall\Gam.\;
\Ginst{G_1}\ \mathrm{defined}
\Rightarrow
\big(
  \Ginst{G_2}\ \mathrm{defined}
  \wedge
  \Lang{\Ginst{G_1}}\subseteq\Lang{\Ginst{G_2}}
\big).
\end{aligned}
\]
\end{definition}

Thus $G_1$ refines $G_2$ when it accepts fewer surfaces at every environment,
while projecting to an aligned semantic construct. Alignment $\palign$ means
that on common surfaces the projections agree up to a DSL-specific equivalence
$\approx_{\Sem}$ on $\Terms(\Sem)$.

\begin{definition}[Tightening]
The operator $\wallop$ replaces an open reference position in a fragment by a
$\Gam$-slot with candidate set $\queryslot(\Gam)$.
\end{definition}

This single edge is what the paper toggles everywhere:
\autoref{fig:paper-map}'s ladder crosses it between \texttt{base} and
\texttt{gamma}; \autoref{sec:evaluation} measures it surface by surface.

\subsection{Semantic-Directed Constrained Generation}
\label{sec:sdcg}

Consider one hole of \autoref{fig:paper-map}: sort \texttt{Buffer}, at an
environment $\Gam$ declaring \texttt{A\_shared}, \texttt{B\_shared}, and
\texttt{C\_local}. The policy $\pol(\sort,\Gam)$ selects a fragment; the
fragment's slot query runs against the current $\Gam$, resolving
\texttt{\%op\%} to $\{$\texttt{A\_shared}, \texttt{B\_shared}$\}$;
instantiation splices these names into the grammar as an escaped alternation;
and the engine compiles $\Ginst{G}$ to a token automaton and decodes under the
mask, where an out-of-grammar token is absent rather than down-weighted
(\autoref{sec:bg-masking}). Three things could break this promise: the splice could leak,
creating strings that were never candidates (Lemma~1 rules this out); the
engine could diverge from the grammar semantics the proof uses (Assumption~1
pins that interface, which we validate by differential testing); and the guarantee
could turn out vacuous or stifling (Theorem~1 states what it gives,
Corollary~1 what it leaves free).

\paragraph{Assumption 1 (engine--denotation correspondence).}
For the fragment slice we instantiate, the GBNF denotation used in the proof agrees with the masking engine's parser, and the engine's token mask is an exact incremental recognizer of that denotation. We empirically validate this assumption over the complete syntactic construct vocabulary used by rendered slots---escaped literals and parenthesized alternation---together with adversarial representatives of their value space and token-level mask replay. The theorem remains explicitly modulo this engine-correspondence assumption. Details and test protocol are in Appendix~A.3.

\begin{lemma}[Escaping Lemma]
Let $C=\{c_1,\dots,c_k\}\subseteq\Names$ be finite and non-empty, and let
$A=\renderalt(C)$. Under the GBNF denotation, $L(A)=C$. Moreover, inlining $A$
into a concatenative production context
$r=\alpha\,\slotname{slot}\,\beta$ yields
\[
  L(r[A]) = L(\alpha)\cdot C\cdot L(\beta),
\]
the slot contributing no string outside $C$.
\end{lemma}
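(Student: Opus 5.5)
The proof proceeds by structural induction on the GBNF denotation, using only the two hygiene properties the renderer guarantees: escaping and parenthesization. I take as given the standard compositional semantics of GBNF, in which juxtaposition denotes concatenation of languages, $\mid$ denotes union, and a parenthesized group denotes the language of its contents.

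\emph{Step 1 (escaped literals).} For each candidate $c\in\Names$, the escaped literal $\esc{c}$ denotes exactly the singleton $\{c\}$. The escaping definition (Appendix~A.1--A.2) maps every metacharacter of the string-literal syntax, such as the quote and the backslash, to an escape sequence that the literal parser decodes back to the original character. So decoding inverts escaping, and the literal cannot terminate early or be read as a nonterminal, a character class, or an operator. This is the one step that depends on concrete syntax. I expect it to be the main obstacle, because it needs an explicit check that the escape map is injective and round-trips over the whole character alphabet that names may use. I would state it as a small sublemma, $\mathrm{unescape}(\mathrm{escape}(c))=c$ with no unescaped delimiter inside, and cite Assumption~1 for agreement between this decoding and the engine's parser.

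\emph{Step 2 (the alternation).} There are two cases, matching the two outputs of $\renderalt$. If $k=1$, then $A=\esc{c_1}$, and Step~1 gives $L(A)=\{c_1\}=C$. If $k\ge 2$, then $A=(\esc{c_1}\mid\cdots\mid\esc{c_k})$. The denotation of $\mid$ and of a parenthesized group gives $L(A)=\bigcup_i L(\esc{c_i})=\bigcup_i\{c_i\}=C$. The non-emptiness hypothesis matters here: it excludes the empty alternation, which is ill-formed or denotes $\emptyset$, and the renderer raises an error in that case anyway.

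\emph{Step 3 (inlining).} Write $r=\alpha\,\slotname{slot}\,\beta$ as a concatenation of three factors. Textual substitution of $A$ could in principle re-associate with neighbouring operators. For example, an unparenthesized $a\mid b$ placed inside $\alpha\,\cdot\,\beta$ would parse as $(\alpha a)\mid(b\beta)$. Parenthesization rules this out: $A$ is either a single atomic literal or a parenthesized group, so it parses as one atom in concatenation position. Hence the parse tree of $r[A]$ is the parse tree of $r$ with the slot leaf replaced by the subtree for $A$. Compositionality of concatenation then gives $L(r[A])=L(\alpha)\cdot L(A)\cdot L(\beta)=L(\alpha)\cdot C\cdot L(\beta)$ by Step~2. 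In particular, every string derived through the slot position contributes a factor from $C$, which is the final claim.
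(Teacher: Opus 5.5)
Your proposal is correct and follows essentially the paper's route. Escaping makes each $\esc{c_i}$ a single literal denoting $\{c_i\}$ because $\mathrm{unescape}\circ\mathrm{escape}=\mathrm{id}$; the alternation then denotes the union $C$; and parenthesization blocks the \texttt{|}-split, so the concatenation clause yields $L(\alpha)\cdot C\cdot L(\beta)$, which the paper packages as standard substitution of a finite terminal set into a concatenative context. One small correction: the lemma is stated under the GBNF denotation, so Step~1 needs no appeal to Assumption~1, which only enters afterwards when Theorem~1 and Corollary~1 transfer the denotational result to the engine's mask.
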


\begin{proof}[Proof sketch]
Escaping makes each $\esc{c_i}$ one literal terminal, and parentheses keep the
alternation local to the slot. The result is ordinary substitution of a finite
terminal set into a concatenative context
\cite{hopcroftullman1979,aho2006compilers}. Full proof: Appendix~A.2.
\end{proof}

Parenthesization prevents the usual \texttt{|}-split bug: splicing an
unparenthesized \texttt{"A" | "B"} into a larger production can create a
top-level alternative and admit strings that were never candidates.

\begin{theorem}[No-Ghost Soundness, single-hole; modulo Assumption~1]
Let a hole $(s,\Gam)$ have $\pol$ select a leaf fragment $G$, and suppose:
(i) every reference position in $G$ is a $\Gam$-slot;
(ii) every such slot has non-empty candidates;
(iii) every reference slot satisfies $\queryslot(\Gam)\subseteq\dom(\Gam)$.
Then
\[
  \Lang{\Ginst{G}}
  \subseteq
  \scopesafe
  =
  \{\,w : \refs{w}\subseteq\dom(\Gam)\,\}.
\]
Thus every string emitted by the mask under $\Ginst{G}$ references only declared
names.
\end{theorem}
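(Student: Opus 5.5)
The plan is to argue at the level of derivations of the instantiated grammar $\Ginst{G}$, using Lemma~1 to control what each slot can yield, and then to lift from derivations to strings through the definition of $\refs{w}$ as a union over all derivations.

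\textbf{Step 1.} Fix $w\in\Lang{\Ginst{G}}$. Note that $\Ginst{G}$ is defined: by (ii) every slot has a non-empty candidate set, so $\renderalt$ does not raise an error.

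\textbf{Step 2.} Take any derivation $d$ of $w$ in $\Ginst{G}$, and any marked reference position $p$ in $d$. By (i), $p$ is not an open identifier; it is a $\Gam$-slot \texttt{\%slot\%}, which instantiation replaced by $A=\renderalt(\queryslot(\Gam))$. Because $G$ is a leaf fragment, there are no sub-holes whose policy-chosen fragments could contribute further reference positions. Hence every marked position in $d$ arises from a slot occurrence in a concatenative context $r=\alpha\,\slotname{slot}\,\beta$.

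\textbf{Step 3.} Apply Lemma~1 to that occurrence. It gives $L(r[A])=L(\alpha)\cdot C\cdot L(\beta)$ with $C=\queryslot(\Gam)$, so the substring yielded at $p$ is some $c\in\queryslot(\Gam)$. Hypothesis (iii) then gives $c\in\dom(\Gam)$. Therefore $\refs{d}\subseteq\dom(\Gam)$.

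\textbf{Step 4.} Since $d$ was arbitrary, $\refs{w}=\bigcup_d\refs{d}\subseteq\dom(\Gam)$, that is, $w\in\scopesafe$. The closing sentence about emitted strings follows from Assumption~1: the mask is an exact recognizer of $\Lang{\Ginst{G}}$, so every completed emission lies in that language.

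\textbf{Main obstacle.} The delicate point is Step 2–3: correctly tying the marked reference positions of a derivation of $\Ginst{G}$ to the subtrees produced by the inlined alternation. Two things must hold.

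First, $\Ginst{G}$ must have no derivation in which a yielded reference name straddles the slot boundary, or in which the name comes from a spurious top-level alternative. This is exactly what the parenthesization clause of Lemma~1 excludes.

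Second, there must be no marked reference position outside the slots, which is hypothesis (i) together with marking adequacy.

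I would make this precise by induction on the derivation tree. The induction would show that the nonterminal introduced for $A$ derives exactly the literals in $C$. Because each $\esc{c_i}$ is a single terminal, the yield at $p$ is then a whole candidate name rather than a fragment of one.
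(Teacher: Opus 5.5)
Your proposal is correct and follows the paper's proof essentially step for step: fix a derivation $d$ of $w$, use (i) to see that each marked reference subtree is an instantiated $\Gam$-slot, apply Lemma~1 to place its yield in $\queryslot(\Gam)$, use (iii) to land in $\dom(\Gam)$, and take the union over derivations. Your extra remarks only make explicit what the paper leaves implicit: that (ii) is what makes $\Ginst{G}$ defined, and that at the derivation level the fact you actually need is the first clause $L(A)=C$ of Lemma~1 applied to the slot subtree, rather than the concatenative equation.
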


\begin{proof}
Take any $w\in\Lang{\Ginst{G}}$, any derivation $d$ with yield $w$, and any
$n\in\refs{d}$. The name $n$ is yielded by a marked reference-position subtree.
By hypothesis, that position is a $\Gam$-slot instantiated as
$\renderalt(\queryslot(\Gam))$. By Lemma~1, the yielded name belongs to
$\queryslot(\Gam)$. By well-formedness,
$\queryslot(\Gam)\subseteq\dom(\Gam)$, so $n\in\dom(\Gam)$. Since $n$ and $d$
were arbitrary, $\refs{w}\subseteq\dom(\Gam)$.
\end{proof}

The theorem establishes support-set reference safety; compilation and
functional correctness remain properties for the external oracle. When a query
returns no candidates, the fallback policy below selects a coarser rung.

\begin{corollary}[Guarantee $\times$ flexibility]
At a reference hole satisfying Theorem~1, ghost names have zero support, and the
remaining name-level freedom is exactly the choice among
$\queryslot(\Gam)$. The model may still choose the wrong legal candidate; the
guarantee is only that every candidate is declared.
\end{corollary}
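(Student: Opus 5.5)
The corollary has two parts: ghost names have zero support, and every declared candidate can still be realized. I would prove them separately.

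The zero-support part follows directly from Theorem~1 and Assumption~1. By Assumption~1, the engine's mask is an exact incremental recognizer of $\Lang{\Ginst{G}}$. So a token is admitted at a prefix $u$ only if some string $w\in\Lang{\Ginst{G}}$ extends $u$ through that token. Every completed emitted string therefore lies in $\Lang{\Ginst{G}}$, and Theorem~1 places it in $\scopesafe$. I would state the consequence at the level of names: a name $n\notin\dom(\Gam)$ never appears at a marked reference position of any emitted string. The reason is that $n\in\refs{w}$ would contradict $\refs{w}\subseteq\dom(\Gam)$. Equivalently, the continuation that spells a ghost at a slot has no completion in $\Lang{\Ginst{G}}$, so exactness gives it mask value zero at some step.

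The exactness clause is that the name-level freedom is \emph{exactly} $\queryslot(\Gam)$. This needs both inclusions at each slot. The inclusion $\subseteq$ comes from Lemma~1: the slot contributes $L(A)=C=\queryslot(\Gam)$ and nothing else. The inclusion $\supseteq$ also uses Lemma~1: for every $c\in C$, the string $\alpha' c\beta'$ lies in $L(r[A])$ whenever $L(\alpha)$ and $L(\beta)$ are non-empty. Non-emptiness of $C$ is hypothesis (ii), and the surrounding context is non-empty because the hole's fragment is instantiable. Hence each candidate is realized by some string in $\Lang{\Ginst{G}}$, and by exactness of the mask each candidate is reachable from any prefix ending right before the slot.

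The main obstacle is this reachability claim. It needs the slot's context to be independent of the chosen candidate: choosing $c$ must not constrain $\beta$ so that some $c$ has no completion. Lemma~1's product form $L(\alpha)\cdot C\cdot L(\beta)$ handles this, since the choice of $c$ is independent of $\alpha$ and $\beta$.

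For the final sentence of the corollary I would only remark that the guarantee is purely about support. Any $c\in\queryslot(\Gam)$, including a semantically wrong one, remains admissible. Nothing beyond membership in $\dom(\Gam)$ is claimed, which is precisely what Theorem~1 delivers.
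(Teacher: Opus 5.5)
Your proposal is correct and follows the paper's route. The paper's sketch is that Lemma~1 makes the slot language exactly $\queryslot(\Gam)$ and Assumption~1 makes the mask realize that language; your two inclusions, with Theorem~1 supplying the zero-support half, spell out the same argument in more detail. One step needs tightening: instantiability only guarantees non-empty candidate sets, so it does not give non-emptiness of the slot's context. Instead, take any completing derivation through the slot with some candidate $c_0$ and swap in an arbitrary $c\in\queryslot(\Gam)$, which the product form of Lemma~1 permits.
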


\begin{proof}[Proof sketch]
Immediate from Lemma~1 and Assumption~1: the slot language is exactly
$\queryslot(\Gam)$, and the mask realizes that language.
\end{proof}

\paragraph{One DSL, many rungs.}
The same hole can be filled at several strengths: an open identifier, the
declared set, a sort- or shape-filtered subset, or a pinned name
(\autoref{fig:paper-map}, left). Where $\Gam$ can say little, a loose
fragment preserves the model's freedom; where $\Gam$ can enumerate the legal
candidates, a tight fragment buys the guarantee; a program mixes both, region
by region, through $\pol$. The refinement order of \autoref{sec:formal-prelim} organizes this
choice, and two results make it safe to exercise at run time.

\begin{proposition}[Refinement preorder]
On fragments with the same anchor sort, $\refn$ is a preorder. Reflexivity is
immediate. Transitivity follows by chaining anchor equality, projection
alignment, and language inclusion at each environment where the fragments are
defined. Antisymmetry holds only modulo projection equivalence; arbitrary
meet/join existence is left open.
\end{proposition}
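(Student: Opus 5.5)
The plan is to verify the three conjuncts of the definition of $\refn$ separately, once for reflexivity and once for transitivity. The anchor-equality conjunct and the language-inclusion conjunct are set-theoretic. The alignment conjunct $\palign$ is the only one that needs care.

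\emph{Reflexivity.} Take $G_1=G_2=G$. Anchor equality is trivial. $\proj{\cdot}_G\palign\proj{\cdot}_G$ holds because on every common surface $w$ the two projections are literally the same term, and $\approx_{\Sem}$ is reflexive. For the environment clause, fix any $\Gam$ at which $\Ginst{G}$ is defined. Then $\Ginst{G}$ is defined on both sides, and $\Lang{\Ginst{G}}\subseteq\Lang{\Ginst{G}}$. Both facts use that instantiation is a function of $(G,\Gam)$: it is defined exactly when every slot query is non-empty, by the behaviour of $\renderalt$.

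\emph{Transitivity.} Assume $G_1\refn G_2$ and $G_2\refn G_3$. Anchor equality chains by transitivity of equality. For the environment clause, fix $\Gam$ with $\Ginst{G_1}$ defined. The first hypothesis gives that $\Ginst{G_2}$ is defined and $\Lang{\Ginst{G_1}}\subseteq\Lang{\Ginst{G_2}}$. Since $\Ginst{G_2}$ is now defined, the second hypothesis applies and gives that $\Ginst{G_3}$ is defined and $\Lang{\Ginst{G_2}}\subseteq\Lang{\Ginst{G_3}}$. Composing the inclusions finishes this clause. The definedness premise is threaded through at the same $\Gam$, which is why the definition carries ``defined $\Rightarrow$ defined''.

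\emph{Main obstacle: alignment.} Read naively, alignment is ``agreement on common surfaces'', and that is not transitive: $G_1$ and $G_3$ may share surfaces that $G_2$ lacks. I will resolve this with the language inclusion just proved. Take any $w\in\Lang{G_1}\cap\Lang{G_3}$, meaning $w$ lies in $\Lang{\Ginst{G_1}}$ for some $\Gam$; I will interpret ``common surfaces'' environment-wise, consistent with the instantiated definition. Inclusion places $w$ in $\Lang{\Ginst{G_2}}$ as well. So $w$ is common to $G_1,G_2$ and to $G_2,G_3$, giving
\[
\proj{w}_{G_1}\approx_{\Sem}\proj{w}_{G_2}\approx_{\Sem}\proj{w}_{G_3}.
\]
This chain needs $\approx_{\Sem}$ to be an equivalence relation, which I take as part of its definition as a DSL-specific equivalence.

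\emph{Antisymmetry caveat.} I will remark that antisymmetry fails in general. Two syntactically different grammars with equal instantiated languages at every $\Gam$, whose projections differ only up to $\approx_{\Sem}$, refine each other without being equal. Antisymmetry therefore holds only on the quotient by mutual refinement, equivalently modulo projection equivalence. No claim about meets or joins is made.
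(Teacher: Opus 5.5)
Your proof is correct and takes the paper's route: reflexivity is immediate, and transitivity chains anchor equality, projection alignment, and the per-environment inclusions, with the definedness premise threaded through at the same $\Gam$. Your one addition is to use $\Lang{\Ginst{G_1}}\subseteq\Lang{\Ginst{G_2}}$ to route every common surface of $G_1$ and $G_3$ through $G_2$, then invoke transitivity of $\approx_{\Sem}$; this makes explicit what the paper's phrase ``chaining projection alignment'' leaves implicit.
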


Proposition~1 makes the order well defined; Lemma~2 is what the runtime leans
on --- switching rungs never re-opens the proof.

\begin{lemma}[Refinement preserves support-set soundness]
Let $G_{\mathrm{tight}}\refn G_{\mathrm{loose}}$. If
$G_{\mathrm{loose}}$ is support-set sound at a hole $(s,\Gam)$, then
$G_{\mathrm{tight}}$ is also support-set sound wherever it is defined:
$\Lang{\Ginst{G_{\mathrm{tight}}}} \subseteq \Lang{\Ginst{G_{\mathrm{loose}}}}
\subseteq \scopesafe$.
Consequently, runtime switching among $\Gam$-slotted rungs
(\texttt{gamma}/\texttt{ctx}/\texttt{pin}) cannot re-admit a ghost. Falling above
\texttt{gamma} to a base grammar deliberately gives up the no-ghost guarantee.
\end{lemma}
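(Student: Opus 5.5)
The plan is to unfold the definition of $\refn$ at the given hole and chain two inclusions. First, I will fix what ``support-set sound at $(s,\Gam)$'' means: $\Ginst{G}$ is defined and $\Lang{\Ginst{G}}\subseteq\scopesafe$. This is exactly the conclusion of Theorem~1, which only concerns the language of the instantiated fragment. Now assume $G_{\mathrm{tight}}\refn G_{\mathrm{loose}}$, and that $\Ginst{G_{\mathrm{tight}}}$ is defined. The third conjunct of the refinement definition, instantiated at this particular $\Gam$, gives two things: $\Ginst{G_{\mathrm{loose}}}$ is defined, and $\Lang{\Ginst{G_{\mathrm{tight}}}}\subseteq\Lang{\Ginst{G_{\mathrm{loose}}}}$. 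Composing this with the soundness hypothesis on $G_{\mathrm{loose}}$ gives the displayed chain, and transitivity of $\subseteq$ finishes it. Anchor equality and projection alignment are not needed for this step. They only ensure that the two fragments are interchangeable at the same hole.

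One subtlety needs checking. $\scopesafe$ is defined through $\refs{w}$, a union over derivations in which positions are marked as references. I must confirm that this set depends only on the string $w$ and the environment $\Gam$, not on which fragment produced $w$. Otherwise a string could be safe as parsed by $G_{\mathrm{loose}}$ but have an unmarked reference under $G_{\mathrm{tight}}$. I will handle this by reading $\scopesafe$ as a fixed subset of $\Sigma^*$ at $\Gam$, as the paper does. Under that reading, language inclusion alone transfers membership. If one wants the fragment-relative reading instead, I would add the marking-adequacy premise of Theorem~1 for $G_{\mathrm{tight}}$, which is validated by the inductor. This bookkeeping is the main obstacle; the rest is definitional.

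For the ``consequently'' clause, I will treat the rungs \texttt{gamma}, \texttt{ctx} and \texttt{pin} as forming a chain $\texttt{pin}\refn\texttt{ctx}\refn\texttt{gamma}$. Proposition~1 (transitivity) then gives $\texttt{pin}\refn\texttt{gamma}$ as well. Next, \texttt{gamma} is sound at every hole where it is defined. This follows from Theorem~1, since its reference positions are all $\Gam$-slots with queries contained in $\dom(\Gam)$. Applying the first part to each rung below \texttt{gamma} shows that any runtime switch among these rungs stays inside $\scopesafe$. For the final sentence, I will note that the base rung lies above \texttt{gamma}, so the lemma gives it no guarantee. The motivating trace in Figure~\ref{fig:trace-cliff}, where the base rung emits a ghost, shows that this loss can actually occur, so the guarantee is genuinely given up.
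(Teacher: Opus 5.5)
Your proof is correct and follows the paper's argument: the first inclusion is the language-inclusion conjunct of $\refn$ instantiated at $\Gam$, the second is the assumed soundness of $G_{\mathrm{loose}}$, and the consequence clause follows from Theorem~1 for the $\Gam$-slotted rungs together with the ladder chain \texttt{pin}$\refn$\texttt{ctx}$\refn$\texttt{gamma}. Your remark that $\scopesafe$ must be read as a fragment-independent subset of $\Sigma^*$ makes explicit an assumption that the paper's one-line proof leaves implicit.
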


\begin{proof}
The first inclusion is the language-inclusion conjunct of $\refn$; the second is
the assumed soundness of $G_{\mathrm{loose}}$. The consequence follows because
each $\Gam$-slotted rung is independently sound by Theorem~1, and tighter rungs
remain inside the language of looser sound rungs.
\end{proof}

A single fixed grammar, or one completion engine, gives no such
\emph{inter-grammar} guarantee: it secures only the one grammar it was built
for.

\paragraph{The tightening edge.}
By Definition~4, $\wallop$ turns an open reference position into a $\Gam$-slot.
When the candidates also lex as the original identifier form,
$\wallop(G)\refn G$; even when escaping takes the names outside that syntactic
comparison, Theorem~1 still gives support-set soundness for
$\Ginst{\wallop(G)}$. This is the edge where the no-ghost guarantee turns on.
On the induction ladder all rungs share a carrier production, so the language
inclusions needed for $\refn$ are witnessed directly by the slot renderer; we
need no general grammar-inclusion solver.

\paragraph{Fallback.}
Lemma~2 licenses a \emph{policy} over the order rather than a fixed choice.
When a rung is unavailable at a hole --- its slot query returns no candidates,
or confidence is low --- $\pol$ climbs one rung up, trading constraint
strength for progress while every emitted string stays grammatical. With
$\Gam$ still empty, a reference hole decodes freely under the base grammar;
once declarations land in $\Gam$, later reference holes are constrained to
them. One three-hole template shows the whole path (\texttt{\{:s\}} is a hole
of sort \texttt{s}):

\begin{lstlisting}[language={}]
SELECT {:ColRef} FROM t0;           -- Gamma empty: pi falls back to base; a free identifier
CREATE TABLE t1 ({c:ColDecl} INT);  -- the declaration extends Gamma
SELECT {:ColRef} FROM t1;           -- Gamma = {c}: pi re-tightens; support is exactly c
\end{lstlisting}

\noindent This is the \emph{graded guarantee} in mechanism form:
construction-level where $\Gam$ expresses the condition, graceful degradation
elsewhere. Degradation does not propagate: each hole
is instantiated from the $\Gam$ at its own decode, so $\pol$ re-tightens as
soon as declarations land (the third hole above). And the empty-query event
doubles as a diagnosis: raised before any token is sampled, it names the hole,
its sort, and the $\Gam$ snapshot --- the earliest mechanically decidable
evidence that the sketch omitted a declaration, which the two-stage pipeline
of \autoref{sec:method-two-stage} can route back to the sketch stage.

\subsection{Fragment Construction}
\label{sec:construction}

A whole program is decoded as many holes, one fragment at a time. Fragments
compose by grammar call; Theorem~1 lifts along the call tree to Corollary~2;
the fragments and the policy are induced offline under a hard gate; and
Proposition~3 shows why exact reference supports cannot be precompiled.

\paragraph{Grammar call and nesting.}
A skeleton grammar $G_{\mathit{top}}$ partitions the program into sort-anchored
region-holes, executed by \textbf{grammar call} with procedure-call semantics
(\autoref{fig:decode-time-interleave}). On entering a hole the executor
forms $c=(\sort,\Gam,\rho)$: the expected sort, the current environment, and
the region path $\rho$ on the call stack. The path lets $\pol$ select
differently for the same sort in different regions. The selected fragment is
\textbf{pushed} onto the grammar-call stack and \textbf{popped} on completion.
The masking engine itself has no such stack and no runtime-$\Gam$
instantiation (\autoref{sec:rel-infra}); both live outside the engine, in the executor. This is
where the grammar becomes \emph{self-extending}: the program, as it is being
written, extends the grammar that constrains its continuation --- the
grammar-level realization of the dynamic support Proposition~3 makes
necessary. Nesting goes to arbitrary depth, since a fragment may itself contain further region holes. For every completed generation, composition expands by finite substitution along an acyclic grammar-call tree. Ordinary CFG recursion inside an individual fragment retains its standard least-fixed-point semantics; recursive cycles through grammar calls are disallowed and additionally guarded by an engineering recursion cap. Nesting does not currently compose with the $\mathit{Times}$ repetition
combinator (\autoref{sec:method-limitations}).

\begin{figure*}[t]
  \centering
  \includegraphics[width=0.95\textwidth]{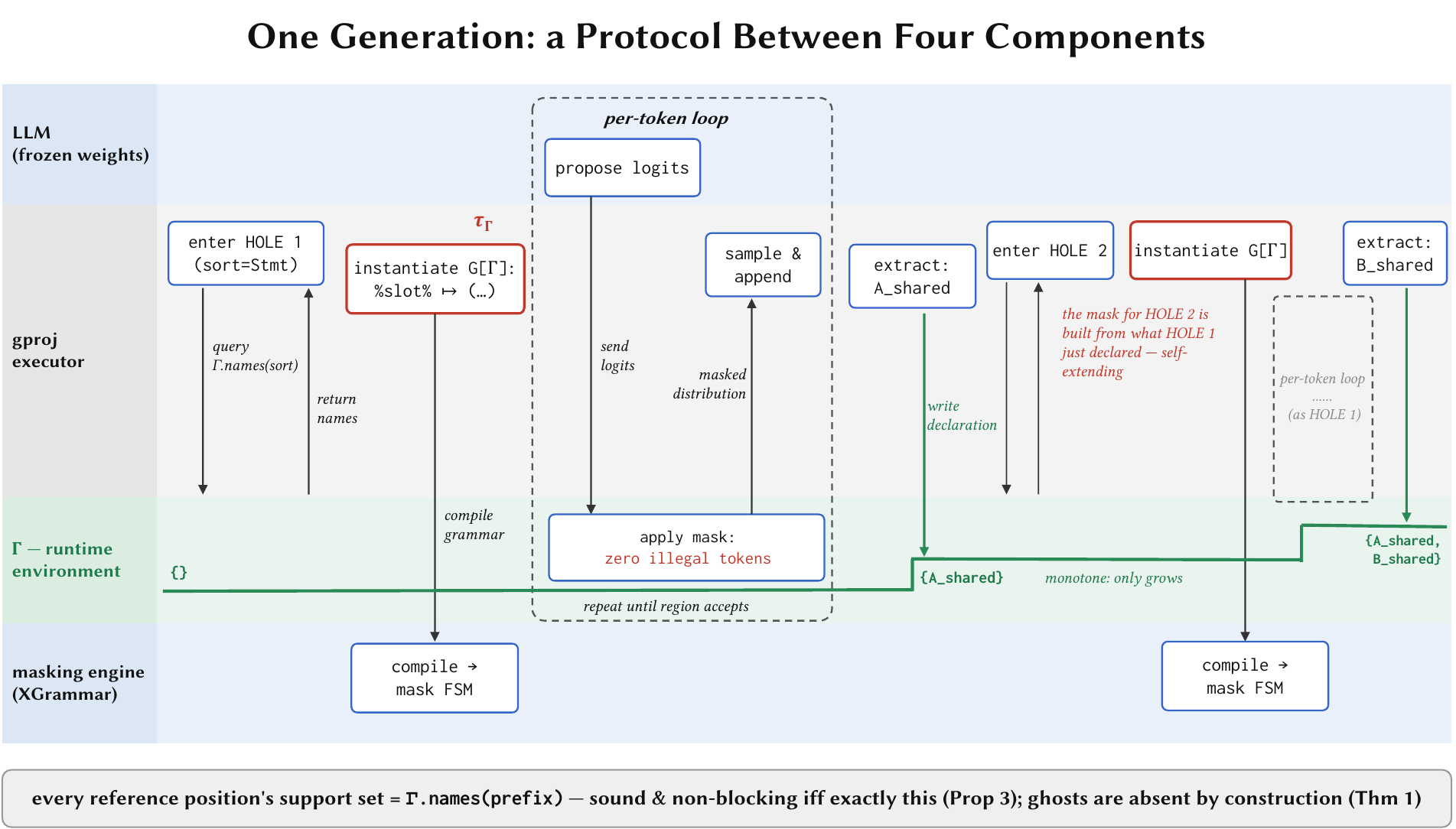}
  \caption{One generation, as a protocol between the four components. At each hole
  the executor queries $\Gam$, instantiates $\Ginst{G}$ (\wallop), and the engine
  compiles it to a token mask; the per-token loop zeroes illegal tokens; extracted
  declarations update the active $\Gam$ snapshot, monotonically within the
  depicted frame --- so \textsc{hole}~2's mask is
  built from what \textsc{hole}~1 just declared (self-extension).}
  \label{fig:decode-time-interleave}
\end{figure*}

\paragraph{Whole-program event model.}
Generation proceeds hole by hole. When a declaration hole completes, the
executor extracts the declared names and extends $\Gam$ \emph{before} any
later hole is instantiated --- the staircase of
\autoref{fig:decode-time-interleave}. In \autoref{fig:paper-map}, this
is what makes \texttt{A\_shared} a candidate for the \texttt{\%op\%} hole
decoded after it.
Declarations extracted from a completed hole are added to the current frame before any later hole in that frame or its descendants is instantiated; cross-region agreement uses the resulting active lexical overlay.
A $\Gam$ fixed before decoding (SQL) and one growing during it (TileLang) share this
one code path. Declaration-before-use is therefore guaranteed at hole
granularity, not necessarily within a single hole. Completed generations
unfold a finite grammar-call tree; non-completing decodes emit no completed
surface. A finite token budget can still strand a locally legal prefix. This
early-exit incompleteness is shared by all masking-based constrained decoding,
whose syntax guarantee is conditional on the decode reaching acceptance; it is
not fundamental, since budget-aware masking excludes, per step, the tokens
from which acceptance is no longer reachable within the remaining budget
\cite{dang2026mitigating}, and composes with our per-hole fragments. All
whole-program claims below are stated for completed generations.

\paragraph{Scope frames.}
The environment has a persistent global layer $\Gam^{g}$ and a stack of
region-local frames $\Delta_0,\ldots,\Delta_k$. The active environment at a
hole is their lexical overlay
$\Gam_k=\operatorname{overlay}(\Gam^{g},\Delta_0,\ldots,\Delta_k)$, with the
nearest frame resolving shadowed names. The persistent global layer and each live local frame grow monotonically, while the active overlay changes on frame push and pop; a grammar call pushes a local frame on entry and pops it on
return. Names declared in a frame are therefore visible to its descendants and
absent from every later sibling. Fragment boundaries coincide with block
boundaries, so the grammar-call tree is also the scope tree. The global and
lexical layers share the same slot-query interface over the active overlay.

\paragraph{Environment fidelity.}
For every generated prefix and active frame, each binding recorded in the active environment corresponds to a declaration visible at that point in the prefix, with its recorded sort and write-once attributes matching that declaration. Declaration extraction and frame management are part of the trusted interface and are required to preserve this invariant.

\begin{corollary}[Whole-program No-Ghost]
Let a completed generation $w$ under $\pol$ unfold a finite grammar-call tree.
For each reference occurrence $r$, let $h(r)$ be the leaf hole whose
instantiated fragment emits $r$, and let $\Gam_r$ be the active environment
when $h(r)$ is entered. If environment fidelity holds and every leaf fragment
selected by $\pol$ satisfies Theorem~1 at its active environment, then
\[
  \forall r\in\operatorname{RefOcc}(w).\quad
  \operatorname{name}(r)\in\dom(\Gam_r).
\]
Thus every emitted reference is in scope at its generation point. References
resolved to the persistent global layer also remain in
$\dom(\Gam^{g}_{\mathrm{final}})$.
\end{corollary}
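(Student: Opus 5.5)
The plan is to reduce the whole-program statement to repeated applications of Theorem~1, one per leaf hole, and then handle the global-layer clause using monotonicity of $\Gam^{g}$.

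\textbf{Step 1: localize every reference to one leaf.} Since $w$ is completed, it is obtained from the finite, acyclic grammar-call tree by finite substitution. I will argue by induction on the tree's height that the yield of $w$ is the concatenation, in tree order, of the yields of the instantiated leaf fragments $\Ginst{G_h}$ at their entry environments $\Gam_h$. Composite templates contribute only literal skeleton text and sub-hole calls, so every marked reference occurrence $r$ lies inside exactly one leaf yield. Call that leaf $h(r)$. This is the step where the acyclicity and finiteness premises are used. It also pins down that $\Gam_r$ is the active overlay at entry to $h(r)$, which the executor fixes before instantiation.

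\textbf{Step 2: apply Theorem~1 locally.} Let $w_h$ be the substring emitted by $h=h(r)$. By the hypothesis on $\pol$, the fragment $G_h$ satisfies (i)--(iii) at $\Gam_r$, so $w_h\in\Lang{\Ginst{G_h}}\subseteq\mathrm{ScopeSafe}(\Gam_r)$. The occurrence $r$ is yielded at a marked reference position of some derivation of $w_h$, so $\operatorname{name}(r)\in\refs{w_h}\subseteq\dom(\Gam_r)$. Assumption~1 and Lemma~1 are used inside Theorem~1 to guarantee that the mask emitted exactly a string of $\Lang{\Ginst{G_h}}$.

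\textbf{Step 3: connect to scope.} The conclusion ``in scope at its generation point'' follows from environment fidelity. Every binding in $\Gam_r$ corresponds to a declaration visible in the prefix at that point, so $\operatorname{name}(r)$ names a visible prior declaration.

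\textbf{Step 4: the global-layer clause.} If $\operatorname{name}(r)$ resolves to the persistent layer, then $\operatorname{name}(r)\in\dom(\Gam^{g}_r)$. Because $\Gam^{g}$ grows monotonically and is write-once, with no pops ever removing global bindings, I get $\dom(\Gam^{g}_r)\subseteq\dom(\Gam^{g}_{\mathrm{final}})$.

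\textbf{Main obstacle.} I expect the hardest part to be Step~1, the localization. Two things must be checked there. First, that the concatenation decomposition is well defined for the parse the reference marking is read along. Second, that no reference position straddles a hole boundary, which could happen if a skeleton literal and a sub-hole yield together formed a name. I would handle this by noting that marking adequacy places every name-yielding reference position inside a single leaf fragment. Composite skeleton positions are not reference positions, because their sub-holes are sort-anchored and resolved by $\pol$ to leaves.
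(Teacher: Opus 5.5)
Your proposal is correct and follows essentially the same route as the paper's proof. Both arguments factor the completed generation through the leaf fragments of the finite acyclic grammar-call tree. Both then apply the single-hole argument (Lemma~1 plus query well-formedness, i.e.\ Theorem~1) at each leaf's entry environment $\Gam_r$, use environment fidelity for source-level visibility, and use monotonicity of $\Gam^{g}$ for the final clause. Your explicit localization step, which rules out a reference straddling a hole boundary, is a more careful version of the paper's remark that composite fragments contribute reference positions only through their leaves.
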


\begin{proof}[Proof sketch]
Each reference occurrence belongs to a leaf fragment instantiated against
$\Gam_r$. Theorem~1 places the reference in that environment; later frame
updates or pops cannot affect this earlier resolution. Monotonicity of
$\Gam^{g}$ gives the final clause for global bindings. Environment fidelity
turns membership in the active environment into source-level lexical
visibility. Full proof: Appendix~A.10.
\end{proof}

At a name-complete reference position, the unfiltered declared-name query recovers the exact support exposed by ordinary symbol-table scope resolution. Corollary~2 gives the decode-time soundness direction, while Proposition~3(iii) characterizes exactness at such positions.

\begin{figure*}[t]
  \centering
  \includegraphics[width=0.96\textwidth]{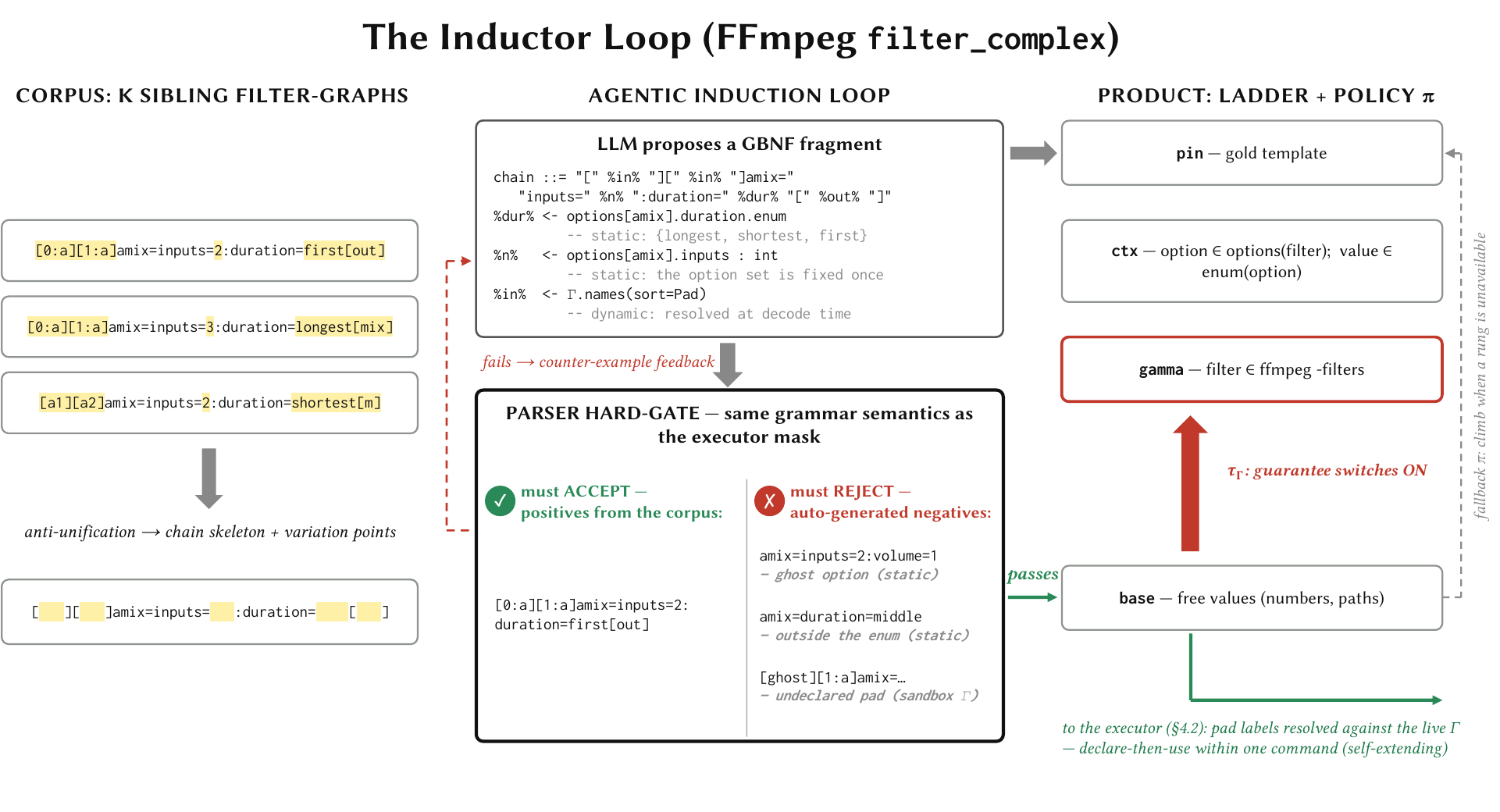}
  \caption{The \inductor{} loop (FFmpeg \texttt{filter\_complex} \cite{ffmpeg}, the
  library surface evaluated in \autoref{sec:eval-invariance}). Anti-unification aligns sibling filter-graphs
  into a chain skeleton \cite{plotkin1970generalization,babble2023popl}; the
  induced fragment is mostly \emph{static} --- filter names, per-filter option
  sets, per-option enum domains, fixed once from the tool --- squeezed by the
  hard-gate between corpus positives and auto-generated negatives
  \cite{arvada2021ase,hygenar2025acl}; only pad labels stay dynamic, resolved
  against the live $\Gam$ at decode time (declare-then-use within one command,
  \autoref{sec:sdcg}). The gate validates the proposal against the same executable grammar
  object used online. Lineage: \autoref{sec:rel-axisD}.}
  \label{fig:induction-loop}
\end{figure*}

\paragraph{Where fragments come from: agentic induction under a hard gate.}
The construction above consumes a fragment library and a policy
(\autoref{fig:induction-loop}).
Both are induced offline, once per DSL, by
a strong model: programming-language-specific knowledge distillation
\cite{hinton2015distill}, except that the product is a grammar --- it edits
the support set, not weights, and is auditable and reusable across weak
models. The loop has four steps.
\begin{enumerate}\setlength{\itemsep}{1pt}
\item From $K$ same-family programs, anchor-based anti-unification
\cite{babble2023popl,plotkin1970generalization} extracts a skeleton and its
variation points.
\item The LLM decides which surface differences are the same semantic
parameter; on the GEMM corpus this recovers
\texttt{block\_M}/\texttt{N}/\texttt{K} as one tile family (\autoref{sec:evaluation}).
\item Each variation point becomes a fragment. Where the LLM observes
``always a previously declared name,'' the position becomes a $\Gam$-slot
plus an \texttt{extract} action: the loop has learned the context-dependence.
\item A hard gate auto-generates counter-examples and squeezes the proposed
GBNF between corpus positives and these negatives
\cite{hygenar2025acl,arvada2021ase}. The gate compiles from the same grammar
object as the executor mask; for a $\Gam$-slot it first instantiates
\texttt{\%slot\%} and then checks that ghost names are rejected.
\end{enumerate}
Once reference marking, query well-formedness, and environment fidelity satisfy the premises above, support-set soundness is independent of the proposer. The hard gate checks executability, corpus coverage, and mined counterexamples; these tests validate candidate fragments but do not replace the stated structural premises. The strong model raises the ceiling---which holes to merge and which rung to recover---not the guarantee floor.
The proposer need not be an LLM: expert-written fragments, or fragments reflected
from existing metadata (compiler headers, a schema, a tool's self-description;
\autoref{sec:evaluation}), enter through the same gate; LLM induction is the route where neither
exists.
Induction is fragment-sized: the loop synthesizes one region's productions at
a time, never a whole language's grammar --- the classically hard problem
\cite{arvada2021ase}.

On a reference hole the loop induces four rungs along $\refn$: \texttt{base}
(general CFG) $\xrightarrow{\wallop}$ \texttt{gamma} (the declared set: the
\wallop{} edge) $\to$ \texttt{ctx} (attribute-specialized, e.g.\
shape-filtered) $\to$ \texttt{pin} (gold), plus the fallback $\pol$ of \autoref{sec:sdcg}.
Each rung is scored on generation utility rather than parse accuracy, by two
static, weight-independent measures --- \textbf{coverage} (is the corpus
reconstructible, by template reverse-matching) and \textbf{\freebits}
(per-hole log-support trace score) --- plus \textbf{oracle-pass} (the rate an external
oracle accepts samples under the rung's mask).

\paragraph{The boundary.}
Proposition~2 delimits what the monotone $\Gam$-scope mechanism enforces
directly and what requires a richer analysis.

\begin{proposition}[Guarantee boundary of the $\Gam$-scope model]
Within the operator-skeleton and $\Gam$-scope model, the following property
classes are enforceable at the mask: \textbf{P1} scope safety, \textbf{P2}
sort/type-tag match via kind-filtered slots, \textbf{P3} declaration-before-use
at hole granularity, and \textbf{P4} shape agreement when shapes are represented
as write-once slot attributes. \textbf{N1} semantic-value correctness,
\textbf{N2} general termination, and \textbf{N3} invariants over mutable
attributes are not reducible, in general, to membership in this monotone
environment; enforcing them requires a semantic verifier, a stateful monitor,
or an externally verified support set.
\end{proposition}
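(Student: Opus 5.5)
The proof has two halves: a positive half for P1--P4 and a negative half for N1--N3. For P1--P4, each property is reduced to a membership condition on a query over the active snapshot $\Gam$. Theorem~1 and Corollary~2 then do the rest.

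\textbf{P1.} This is exactly Theorem~1 at a single hole and Corollary~2 for the whole program. It needs the premises already stated: marking adequacy, $\queryslot(\Gam)\subseteq\dom(\Gam)$, and environment fidelity.

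\textbf{P2.} Take the kind-filtered query $\queryslot(\Gam)=\{n\in\dom(\Gam)\mid \Gamns(n)=s\}$. By Lemma~1 the slot yields only such $n$. Because bindings are write-once within a frame, the sort read at instantiation is still the sort of $n$ when it is used.

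\textbf{P3.} The whole-program event model extracts declarations and extends the current frame before any later hole is instantiated. Corollary~2 then places every reference in the environment active at its own hole. Environment fidelity turns this into a source-level declaration earlier in the prefix, though only at hole granularity.

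\textbf{P4.} Shape agreement works like P2, with the filter extended to attributes: $\{n \mid \Gamns(n)=\mathtt{Buffer}\langle\rho,\mu,\sigma\rangle,\ \sigma=\sigma_0\}$. It is sound only because shape attributes are write-once, so the filtered set cannot go stale. The \texttt{ctx} rung refines \texttt{gamma}, so Lemma~2 keeps P1 in force while P4 is added on top.

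For N1--N3, I argue that each property is not reducible to membership in a monotone environment. The shared template is this. Fix the skeleton and the snapshot $\Gam$, so every slot has a fixed candidate set. Then exhibit two completions that are indistinguishable to every slot query but differ on the property. Since the mask's support at each hole depends only on $(\sort,\Gam,\rho)$, no $\Gam$-slotted fragment can admit one completion and exclude the other.

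\textbf{N1.} Use two well-scoped kernels that differ only in a literal or in a choice between legal candidates. An example is \texttt{T.gemm} with its operands swapped, or \texttt{SELECT} on a wrong but existing column. By Corollary~1 both candidates have support, yet only one is semantically correct. The general case reduces to deciding functional equivalence, which is undecidable by Rice's theorem, so no finite membership test suffices.

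\textbf{N2.} Termination of completed programs contains the halting problem once the generated language can express unbounded loops. A finite per-hole candidate set over a monotone environment induces only a decidable support, so the two cannot coincide in general.

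\textbf{N3.} This is the most delicate case, and I expect it to be the main obstacle. Take an attribute that can be mutated after declaration, for example a buffer's ``initialized'' or ``freed'' state. A slot query sees only write-once bindings. Two prefixes can produce the same $\Gam$ while differing in whether a mutation occurred, so the invariant is not a function of $\Gam$. Enforcing it needs a non-monotone state, that is, a stateful monitor.

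The hardest part is keeping the phrase ``in general'' honest. The claim has to be stated relative to the formal model: slot queries are functions of the monotone snapshot alone. With that restriction the indistinguishability argument is rigorous. The appeals to Rice's theorem and halting are needed only to rule out richer verifiers, and for those I would cite the external semantic verifier, monitor, or verified support set listed in the statement as the required escape hatch.
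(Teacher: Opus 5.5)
Your proposal follows the paper's own argument: P1--P4 are discharged as in the boundary table of Appendix~A.5 and the write-once restriction of Appendix~A.9, namely Theorem~1, Lemma~1 and Corollary~2 applied to unfiltered, kind-filtered and shape-filtered queries, with write-once bindings keeping the filtered sets from going stale. N1--N3 rest on the same three reasons as in the paper (Rice's theorem, the halting problem, and state outside the monotone layer), and your indistinguishability template is a more explicit packaging of them. One small caution: the swapped-\texttt{T.gemm} example is not indistinguishable to every slot query. Because $\mathtt{Buffer}\langle\rho,\mu,\sigma\rangle$ carries role and shape attributes, a shape-filtered (P4) query excludes the swap whenever the operand shapes differ. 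Let the wrong-but-existing SQL column, where intent lies outside $\Gam$, together with the Rice appeal carry N1.
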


N1 ranges over program semantics rather than $\Gam$ membership and is
undecidable in general \cite{rice1953}, although particular decidable properties
can be supplied by an external monitor. N2 contains the halting problem
\cite{turing1936}; N3 requires state beyond the monotone layer used by the
proof. At such holes the policy either invokes the richer analysis or falls
back to a coarser fragment. Appendix~A.5 gives the full table.

\paragraph{Necessity of dynamic reference support.}
\label{sec:prop3}
The grammar skeleton may be compiled ahead of time; its exact reference support
cannot be fixed ahead of the generated environment. Remark~1 rules out one
fixed CFG for declaration consistency; Proposition~3 rules out the broader
static alternative, a finite family of precompiled grammars with fixed
reference supports, under any dispatch function that selects among them by
prefix. Compilers separate context-free parsing from symbol-table-based name
resolution. Generation brings that resolution into the causal loop: each
reference position receives support computed from the prefix that precedes it.

A \emph{static masking system} consists of a finite grammar family
$\mathfrak{G}=\{G_1,\dots,G_m\}$ fixed before decoding, and a dispatch function
$\delta$ from prefixes to grammars. It satisfies the \emph{static-support
assumption} if the support presented at a reference position is drawn from a
finite reference-support library fixed before decoding. This captures ordinary
static grammar composition and tag dispatch. A system whose support range over
prefixes is not contained in any finite precompiled library already performs
the dynamic reference-support computation isolated here.

The system is \emph{sound} if no reference-position support contains an
undeclared name. It is \emph{non-blocking} if it never masks out a scope-safe
continuation.

\begin{proposition}[Necessity and exactness of dynamic reference support]
\label{prop:necessity}
Let the DSL contain a declare-then-use form over an identifier space
$\mathrm{Id}$ with a two-letter subspace. Then:

\smallskip
\noindent\textup{(i)}
no single CFG accepts exactly $L_{\mathrm{scope}}$, the
declaration-consistent language;

\smallskip
\noindent\textup{(ii)}
No static masking system under the static-support assumption is both sound and
non-blocking.

\smallskip
\noindent\textup{(iii)}
A masking system is sound and non-blocking at every reference position iff its
support at prefix $p$ is exactly $\dom(\Gam(p))$.

\smallskip
Therefore, over an unbounded identifier space, any sound, non-blocking mask
must synthesize from the prefix the exact reference support
$S(p)=\dom(\Gam(p))$. Decode-time instantiation $\Ginst{G}$ with the unfiltered
declared-name query realizes this necessary support directly; filtered queries
for sort or shape sit below it, remaining sound by refinement while
intentionally giving up name-level non-blocking for stronger guarantees.
\end{proposition}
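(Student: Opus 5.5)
I would prove the three parts in the order (iii), (ii), (i). Part (iii) is a pointwise characterization from which (ii) follows by counting, and (i) is just Remark~1 made explicit.

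\textbf{Part (iii).} Fix a prefix $p$ ending at a reference position, and let $S(p)$ be the set of names admitted there. Soundness at $p$ means $S(p)\subseteq\dom(\Gam(p))$. Non-blocking at $p$ means every scope-safe continuation is admitted. For each $n\in\dom(\Gam(p))$, the continuation that emits $n$ at this position is scope-safe, because environment fidelity makes $n$ a visible declaration. So non-blocking gives $\dom(\Gam(p))\subseteq S(p)$. Conversely, if $S(p)=\dom(\Gam(p))$, then soundness holds directly, and non-blocking holds because any scope-safe continuation names an element of $\dom(\Gam(p))$. I would state non-blocking per position, at name granularity, so that this equivalence is literal.

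\textbf{Part (ii).} Under the static-support assumption, the supports available at reference positions form a finite library $\mathcal{L}=\{S_1,\dots,S_K\}$ fixed before decoding. Suppose the system is sound and non-blocking. By (iii), every reachable prefix $p$ must then satisfy $\dom(\Gam(p))\in\mathcal{L}$.

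The key step is to show that the set $\{\dom(\Gam(p))\}$ is infinite. The two-letter subspace gives infinitely many distinct identifiers, say $a^k b$ over letters $\{a,b\}$. For each $k$, take the prefix consisting of a single declaration of $a^k b$ followed by a use position. Its domain is $\{a^k b\}\cup\dom(\Gam_0)$, and these sets are pairwise distinct. Finitely many library supports cannot equal infinitely many distinct domains, so some prefix receives a support that is not $\dom(\Gam(p))$. By (iii), that prefix violates soundness or non-blocking. This argument holds for any dispatch $\delta$, since the restriction comes from the range of supports, not from how $\delta$ chooses among them.

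\textbf{Part (i).} This is Remark~1. Intersect $L_{\mathrm{scope}}$ with a regular skeleton of the form ``declare $x$; use $y$'' with $x,y\in\{a,b\}^*$. The intersection is the image of the copy language $\{w\#w\}$ under a homomorphism. Context-free languages are closed under intersection with regular sets and under inverse homomorphism, and $\{w\#w\}$ is not context-free by the pumping lemma. So no CFG accepts exactly $L_{\mathrm{scope}}$; the details are in Appendix~A.8.

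\textbf{Final paragraph of the statement.} Theorem~1 with the unfiltered query $\queryslot(\Gam)=\dom(\Gam)$ gives $S(p)=\dom(\Gam(p))$ by Lemma~1 and Assumption~1, which is exactly the condition in (iii). Filtered queries define tighter rungs, and Lemma~2 keeps them sound.

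\textbf{Main obstacle.} The delicate step is making ``support'' precise in (ii) and (iii). Supports live at the token level, while the argument works at the name level. I expect to handle this by defining the support at a reference position as the set of completed names the mask admits there, which Assumption~1 justifies. I also need the distinct-domain witnesses in (ii) to be reachable prefixes of the DSL, which the declare-then-use hypothesis supplies.
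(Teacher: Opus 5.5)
Your proposal is correct and follows essentially the paper's proof: (iii) is the same pair of inclusions from soundness and non-blocking, (ii) is the same pigeonhole of infinitely many distinct declared sets against a finite fixed support library under an arbitrary dispatch (you invoke (iii) rather than rederiving $S(p_D)=D$ at each witness), and (i) is Remark~1's copy-language reduction. Two small points: your witness domains $\{a^kb\}\cup\dom(\Gam_0)$ are pairwise distinct only for the $k$ with $a^kb\notin\dom(\Gam_0)$, so you need $\dom(\Gam_0)$ finite (the paper avoids this by taking $D=\{n\}$ exactly), and your inverse-homomorphism step in (i) relies on the homomorphism being injective, whereas the paper instead pushes forward along a delimiter-erasing homomorphism.
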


\begin{proof}[Proof sketch]
Part (i) is Remark~1. For (ii), prefixes can produce infinitely many distinct
declared sets $D$, while a static system has only finitely many reference
supports. Soundness and non-blocking force the support at each prefix to be
exactly $D$: soundness gives $S\subseteq D$, and non-blocking gives
$D\subseteq S$. By pigeonhole, two distinct declared sets must share a support,
contradicting exactness. Part (iii) is exactly the two inclusions just stated.
Our decode-time instantiation realizes this support by querying
$\dom(\Gam(p))$;
Theorem~1 gives soundness, and exact instantiation gives non-blocking for the
unfiltered query. Sound and non-blocking are the mask-level analogues of
preservation and progress in syntactic type soundness
\cite{wrightfelleisen1994}. Full proof: Appendix~A.4.
\end{proof}

Clause (ii) composes Floyd's classical observation with a finite-support
pigeonhole argument. Clause (iii) is the architectural characterization:
independent of implementation, the maximal sound reference support is exactly
the declared set computed from the prefix. Our self-extending grammar realizes
this necessary runtime interface by materializing that set into the mask.

If the identifier space were artificially bounded to $n$ names, a static family
of size $2^n$ could pre-enumerate all supports. Real DSLs have unbounded or
effectively unbounded name spaces, so no finite fixed-support family suffices.

\subsection{Clients: Induced Grammars in the Wild}
\label{sec:clients}

The same \texttt{reference\_ladder} machine instantiates all three client
surfaces of \autoref{sec:evaluation}; only the anchor sort and the source of $\Gam$ change. We show
the induced fragments as they exist in the artifact. Each passed the hard-gate
of \autoref{sec:construction} (positives accepted, mined ghosts rejected) before use.

\paragraph{Tensor-kernel operands (TileLang).}
Each induced fragment exists in two states. What induction produces is
\emph{static}: the carrier production with its operand positions as slot
\emph{expressions}, each query attached but unevaluated --- at induction time
there is no runtime $\Gam$ to evaluate against:

\begin{lstlisting}[language={}]
gemm ::= "T.gemm(" %op% ", " %op% ", " %acc% ")"   -- induced fragment: slots unevaluated
  %op%  <- Gamma.names(sort=Buffer, mem=shared)    -- attached slot queries
  %acc% <- Gamma.names(sort=Buffer, mem=fragment)
\end{lstlisting}

\noindent At decode time the executor evaluates the queries against the live
environment and splices the results (\autoref{sec:sdcg}). At a $\Gam$ declaring
\texttt{A\_shared}, \texttt{B\_shared} (shared) and \texttt{C\_local}
(accumulator fragment), the same fragment becomes:

\begin{lstlisting}[language={}]
gemm ::= "T.gemm(" ("A_shared" | "B_shared") ", " ("A_shared" | "B_shared") ", " "C_local" ")"
\end{lstlisting}

\noindent The support is thus \textbf{two-layered}. The static layer ---
skeleton, slot positions, queries --- is what induction can derive with no
runtime state; the dynamic layer of query evaluation, splicing, and masking is
what the executor supplies at decode time. The role filter (shared operands
feed the A/B positions, the fragment buffer the accumulator) is the
\texttt{ctx} rung's attribute specialization; dropping it recovers the
\texttt{gamma} rung, whose single slot is the whole declared set.

\paragraph{Schema columns (SQL)}
The column hole's ladder queries $\Gam$ populated by schema introspection; the
\texttt{ctx} rung (\texttt{gamma\_table} in \autoref{sec:eval-sql}) intersects with the
\texttt{FROM} table's columns:

\begin{lstlisting}[language={}]
sel ::= "SELECT " col " FROM singer" ...
col ::= "name" | "country" | "age"   -- Gamma.names(sort=Column) restricted to columns(singer)
\end{lstlisting}

\noindent A cross-table column is legal at the \texttt{gamma} rung but excluded
at \texttt{ctx}; a ghost column is excluded at both.

\paragraph{Live tool state (CLI)}
For \texttt{git} \cite{gitscm}, $\Gam$ holds the live refs --- and grows as commands execute
(a branch is referenceable only after \texttt{checkout -b} creates it):

\begin{lstlisting}[language={}]
checkout ::= "git checkout " ref
ref ::= "main" | "dev" | "release-2.0" | "hotfix-login"   -- Gamma.names(sort=GitRef): live branches
\end{lstlisting}

\noindent The same ladder runs a third $\Gam$ source; no grammar was written
by hand for any of the three. The inductor works fragment by fragment, squeezing
each executable grammar between corpus positives and generated negatives before
the online executor uses it. \autoref{sec:evaluation} measures what these fragments buy at decode
time.

\section{Implementation and Scope}
\label{sec:method}

The two components of \autoref{sec:intro} realize the two halves of \autoref{sec:formalization}. The
\textbf{executor} (\gproj) runs online, decoding each region-hole under the
fragment selected by $\pol(\sort,\Gam)$ on the grammar-call stack. The
\textbf{inductor} (\inductor) runs offline, constructing the four-rung ladder,
reference markings, $\Gam$-slots, extract actions, and fallback policy under the
hard gate. At runtime, the executor JIT-instantiates this artifact at the typed
holes of the model's sketch, so the resulting analysis facts act directly on
token support.

\paragraph{One invariant, kept two ways.}
At each constrained reference position, the instantiated support is exactly
$\queryslot(\Gam(p))$, with
$\queryslot(\Gam(p))\subseteq\dom(\Gam(p))$. The \textbf{executor} enforces this
invariant online by re-instantiating slots per hole and recording generated
declarations in $\Gam$; the \textbf{inductor} constructs the reference
markings, queries, and fragment policy offline. The shared hard gate checks
executability, corpus coverage, and mined ghosts. At name-complete positions,
Proposition~3 characterizes $\dom(\Gam(p))$ as the maximal sound,
non-blocking support and rules out finite fixed-support precompilation.
Accordingly, the executor JIT-specializes grammar fragments from the evolving
environment, while the inductor prepares their static structure.

\subsection{Two-stage constrained synthesis}
\label{sec:method-two-stage}

The workflow separates \textbf{Stage~1---sketch search}, which uses prompting,
multiple attempts, or self-repair to explore semantic structure, from
\textbf{Stage~2---constrained realization}, which fills reference-sensitive
details in one masked pass under $\pol(\sort,\Gam)$. The inductor supplies the
policy and the executor performs Stage~2; \autoref{sec:eval-roadmap} evaluates the
pipeline end to end on a production kernel.

\subsection{Backends and engineering}
\label{sec:method-backends}

Masking and \freebits{} logic are backend-agnostic: the same executor drives
in-process HF+XGrammar and vLLM serving, and a $\Gam$ fixed before decoding
(SQL) and one growing during it (TileLang) share one code path.
Each hole has its own token budget
(\texttt{max\_new\_tokens}), and grammar-call nesting is capped at
\texttt{\_MAX\_NEST\_DEPTH = 32}. Unit and GPU end-to-end tests include
32-deep nesting, 300-name $\Gam$ sets, and maliciously injected names.

\subsection{Applicability}
\label{sec:method-scope}

Every directly supported property has the form
\textbf{skeleton + reference filling}: a CFG-expressible skeleton whose
reference positions select entities from a runtime $\Gam$ (the P1--P4 side of
Proposition~2). Properties requiring execution, whole-program relations, or
mutable fixed-point state compose through a verifier or monitor (N1--N3), while
the grammar retains its construction-level guarantee.

The strongest fits are reference-dense, declaration-scoped, low-resource
surfaces: tensor-kernel DSLs such as TileLang and Triton, query languages such
as SQL and jq, and P4 (\autoref{sec:evaluation}). The same abstraction covers library/API usage by
placing available functions, intrinsics, options, and version-specific names in
$\Gam$ (\autoref{sec:eval-invariance}--\autoref{sec:eval-induction}). General-purpose
C++ and Python often require heap, alias, synchronization, or scheduling facts
beyond name membership, while regex has no runtime environment to specialize.
\autoref{sec:conclusion} discusses extending the same decode-loop interface to richer analyses.

\subsection{Limitations}
\label{sec:method-limitations}

\textbf{Guarantee boundary.} No-Ghost and P1--P4 hold under their stated
premises; compilation and functional correctness remain oracle properties, and
a budget-exhausted fragment emits no completed surface. \textbf{Engine
correspondence.} Theorem~1 is modulo Assumption~1, validated on the rendered-slot
slice by differential testing and token-mask replay. \textbf{Composition.} The
$\mathit{Times}$ combinator does not yet compose with nested grammar calls;
refinement is a preorder, and cross-boundary sort agreement additionally
requires a correspondence between host and embedded sort vocabularies.

\section{Evaluation}
\label{sec:evaluation}

We evaluate four questions: \textbf{(Q1)} does replacing an open reference with a $\Gam$-typed slot---the $\wallop$ edge---eliminate ghost references by construction? \textbf{(Q2)} does this guarantee transfer across models, programming paradigms, and sources of $\Gam$? \textbf{(Q3)} what does it buy relative to prompting, retry, and tighter fragments? \textbf{(Q4)} what serving overhead does runtime specialization add? We first isolate the $\wallop$ cliff (\autoref{sec:eval-wall}), then test its transfer across settings (\autoref{sec:eval-invariance}), compare it with cheaper biasing on SQL (\autoref{sec:eval-sql}), evaluate offline induction (\autoref{sec:eval-induction}), compose the pieces end to end (\autoref{sec:eval-roadmap}), and finally quantify serving overhead (\autoref{sec:eval-overhead}).

\paragraph{Setup.}
We use Qwen3-0.6B as the primary model and replicate selected results on gpt-oss-20B, Qwen3.5-27B-distill, a 120B Nemotron served by vLLM, and the 236B DeepSeek-V4-Flash \cite{qwen32025arxiv,vllm2023sosp}. Oracles are the actual target systems: TileLang compilation plus a torch numerical reference \cite{tilelang2025arxiv}, SQL execution-match on Spider \cite{testsuite2020emnlp,spider2018emnlp}, the \texttt{p4c-bm2-ss} compiler, library-specific validators, and the command-line tools themselves. Unless stated otherwise, each controlled $\wallop$ ablation holds the model, prompt, sampling configuration, and surrounding fragment fixed, changing only the source of a reference: an open identifier (\texttt{free}) versus a $\Gam$-typed slot (\texttt{gamma}) or its context-filtered refinement. We report support freedom using \freebits{} and distinguish the construction-level No-Ghost guarantee (Theorem~1, modulo Assumption~1) from empirical oracle pass rates.

\subsection{The cliff, and its cause}
\label{sec:eval-wall}

Scope safety changes at exactly the $\wallop$ edge. On one TileLang GEMM task with Qwen3-0.6B, we progressively loosen only the fragment used for operand holes. Across four rungs, \freebits{} rises from $92$ to $287$, $538$, and $998$, while compilation is $100\%$, $100\%$, $67\%$, and $0\%$. The final transition changes only operand references from $\Gam$-typed candidates to open identifiers; every resulting failure is a ghost-buffer \texttt{NameError}. The $67\%$ is 2-of-3 seeds in this diagnostic sweep; its residual failure is an unrelated warp-count error, while the $n=12$ rerun reported in \autoref{tab:wall-summary} reaches $75\%$. Thus increasing freedom does not itself cause the collapse: compilation falls to zero precisely when references leave $\Gam$. \autoref{tab:wall-summary} tests the same edge across models and surfaces.

\subsection{Invariance: capacity, paradigm, surface}
\label{sec:eval-invariance}

Theorem~1 makes No-Ghost independent of model weights and programming surface once its premises hold. We therefore test whether the empirical failure motivating the guarantee recurs across models, paradigms, and sources of $\Gam$. \autoref{tab:wall-summary} summarizes the controlled \texttt{free} versus $\Gam$-typed ablations: every $\Gam$-typed arm has zero ghosts by construction, whereas every \texttt{free} arm fails through ghost references.

\paragraph{Capacity.}
With operand references open, TileLang compilation is $0\%$ for the 0.6B, 20B, 27B, and 120B models. All 36 HF-model failures and both 120B endpoint failures are ghost-name errors. We make no capacity claim from the intermediate $\Gam$-typed compilation rates, whose confidence intervals overlap and whose residual failures are unrelated to scope; the relevant result is that increased capacity does not remove off-$\Gam$ references.

At the frontier, unconstrained DeepSeek-V4-Flash transfers a neighboring TVMScript dialect into TileLang: all eight initial samples use incompatible API forms, and all four rigorous toolchain reruns fail on undeclared references. Under induced TileLang grammars, the same model produces real-API, ghost-free programs on 12 GEMM and 8 elementwise samples. All GEMM samples and seven elementwise samples also pass the numerical oracle; the remaining elementwise sample compiles but exceeds shared-memory capacity. Stronger models improve the surrounding program, but the reference guarantee still comes from the support constraint.
\label{sec:eval-strongmodel}

\paragraph{Paradigm.}
The same edge appears outside TileLang. On Spider SQL, open column identifiers produce ghost columns in all 90 samples and $0\%$ execution-match, whereas restricting the column to the selected table's schema gives $100\%$ execution-match (\autoref{sec:eval-sql}). On P4, open references yield $0\%$ compilation with 8/8 ghost failures, while the multi-sort $\Gam$ arm compiles all samples with zero ghosts. The result replicates with gpt-oss-20B on \texttt{basic.p4} and with both the 0.6B and 20B models on the structurally distinct \texttt{l2fwd.p4}; every $\Gam$-arm sample compiles with zero ghosts. These cases cover imperative, declarative, and data-plane languages with the same declaration-and-reference mechanism.

\paragraph{Surface.}
The mechanism also applies beyond DSLs. For \texttt{git checkout}, selecting from live branches passes the repository oracle in all eight samples, whereas an open identifier produces eight plausible but nonexistent branch names. The pattern repeats for \texttt{make} targets and \texttt{npm} scripts. Static API surfaces exhibit the same failure: blind AVX-512 generation invents eight nonexistent intrinsics, all excluded by a fragment reflected from the compiler headers; FFmpeg generation similarly produces ghost options, version-specific names, and undeclared pad labels. These masks guarantee availability, not intent: the model may still select the wrong legal branch, option, or operand.

\begin{table}[tb]
\centering
\small
\begin{tabular}{lllc}
\toprule
Paradigm / surface (oracle) & $\Gam$ pass\% & free pass\% & ghost (free) \\
\midrule
TileLang compile, 0.6B & 75\% & 0\% & 12/12 \\
\quad 20B / 27B & 83\% / 67\% & 0\% / 0\% & 24/24 \\
SQL exec-match,\textsuperscript{$\dagger$} 0.6B & 100\% & 0\% & 90/90 \\
P4 compile, 0.6B & 100\% & 0\% & 8/8 \\
CLI \texttt{git} rev-parse, 0.6B & 100\% & 0\% & 8/8 \\
\bottomrule
\end{tabular}
\caption{The $\wallop$ cliff across models and programming surfaces. Every $\Gam$-typed arm is ghost-free by construction; residual oracle failures are unrelated to scope. Every \texttt{free} arm shown fails through ghost references. TileLang uses $n=12$ per model, SQL $n=90$, and P4/CLI $n=8$.}
\label{tab:wall-summary}
\end{table}

\subsection{Why by construction: functional correctness vs.\ cheaper biasing}
\label{sec:eval-sql}

By-construction scope safety is useful only if it contributes to functional correctness under a real oracle and offers something prompting or retry does not. We use SQL/Spider execution-match, which compares result rows rather than merely checking whether a query runs.

\paragraph{The oracle is stronger than executability.}
On a committed hard subset of 30 Spider tasks with three seeds each, unconstrained generation executes in $87\%$ of cases but matches the reference result in only $76\%$, confirming that execution-match tests more than syntactic or runtime validity. The context-filtered $\Gam$ arm reaches $100\%$ execution-match while remaining ghost-free by construction. On a separate simple-query subset, the refinement ladder shows the expected trade-off: execution-match rises from $0\%$ at \texttt{free} to $85\%$ at \texttt{gamma} and $100\%$ at \texttt{gamma\_table} and \texttt{fine}, while \freebits{} falls from $249.8$ to $17.9$, $13.8$, and $0$. Replication on the 20B and 27B models again gives $0\%$ execution for open column identifiers, while the context-filtered arms remain ghost-free. The guarantee remains scope safety rather than full SQL correctness: a legal column may still be assembled into the wrong query.

\paragraph{Cheaper alternatives help but do not guarantee scope.}
On the same 60 Spider samples, every prompt contains the schema; the \texttt{free} and context-filtered masked arms share the same literal prompt and differ only in the mask. Explicitly listing the selected table's legal columns raises unconstrained execution-match to $93.3\%$, but still produces four ghost columns. Retrying execution failures up to four times reaches $90.0\%$, produces five ghosts, and uses 40.0 generated tokens on average. The context-filtered mask reaches $100\%$ in one pass with no ghost columns and 6.6 generated tokens on average. Prompting and retry improve the distribution; only masking removes invalid columns from its support.

\begin{table}[tb]
\centering
\small
\begin{tabular}{lrrrc}
\toprule
Method & exec-match\% & ghost cols & avg tokens & by-constr.\ ghost-free? \\
\midrule
free (open ident) & 0.0 & 60 & 20.8 & --- \\
unconstrained & 70.0 & 11 & 22.9 & no \\
prompt-schema & 93.3 & 4 & 17.2 & no \\
retry ($k\le 4$) & 90.0 & 5 & 40.0 & no \\
\textbf{gamma\_table (\wallop)} & \textbf{100} & \textbf{0} & \textbf{6.6} & \textbf{yes} \\
\bottomrule
\end{tabular}
\caption{The mask versus cheaper alternatives on the same 60 Spider samples, oracle, and 0.6B model.}
\label{tab:sql-cheaper-baselines}
\end{table}

\paragraph{When feedback is absent.}
CMake exposes a stricter limitation of repair: 8 of 11 observed error classes produce no diagnostic. Misspelled variables silently expand to empty strings, and version-sensitive predicates may simply evaluate false. Retry and self-repair therefore receive no failure signal, whereas $\Gam$-typed fragments exclude unknown variables and enumerable compiler properties before generation.

\subsection{Installing it offline: inducing the configuration}
\label{sec:eval-induction}

Offline induction is evaluated without model decoding: corpus positives must be accepted and mined ghost substitutions rejected by the same parser object later used for masking. This isolates whether the inductor can construct executable $\Gam$-typed fragments from a real low-resource surface.

\paragraph{AscendC / \texttt{ops-nn}.}
We induce fragments from Huawei CANN's \texttt{ops-nn} library, covering 22 operator categories and hundreds of operators. One fragment family covers 69 sibling \texttt{foreach} operators sharing a kernel skeleton; another captures the role-typed \texttt{mm\_} protocol across two matrix-multiplication operators. The induced fragments recover three forms of context dependence beyond ordinary declaration-before-use: a variant choice that determines downstream fragment shape, positional role filters that reduce a slot to the matching argument, and registered combinations that form a strict subset of the Cartesian product of their axes. Every fragment passes the shared hard gate on corpus examples and mined ghosts. Because the available environment lacks a local CANN compiler, this case evaluates family coverage and parser-level support construction rather than end-to-end compilation.

The same inductor recovers the four-rung ladder (\texttt{base}$\to$\texttt{gamma}$\to$\texttt{ctx}$\to$\texttt{pin}) on SQL, jq, and TileLang. On jq, the \texttt{base} rung returns silent nulls for ghost fields, while \texttt{gamma} answers all seven cases under the real oracle; Appendix~A.7 reports coverage and monotone \freebits{} along the induced ladders.

\paragraph{A private library isolates negative transfer.}
On a table API designed to differ systematically from pandas \cite{privatelibrary2022emnlp}, full documentation yields 18/18 passing generations, whereas names-only and no-documentation conditions produce parameter-level and API-level negative transfer. Across 15 observed error classes, the induced \texttt{gamma} and \texttt{ctx} fragments exclude 13 at the support level; the remaining failures require value semantics or a stateful call-order protocol. This case separates reference and parameter constraints from properties outside the monotone $\Gam$ model.

\subsection{Composition and the end-to-end pipeline}
\label{sec:eval-roadmap}

The full two-stage pipeline is evaluated on grouped-query attention (GQA) \cite{gqa2023emnlp}: a sketch model selects a policy under a constrained enumeration, and an implementation model decodes the kernel body under the resulting masks. With Qwen3-0.6B, the $\Gam$-typed arm compiles 8/8 samples, while both open-identifier and unconstrained arms compile 0/8. The $\Gam$ arm remains ghost-free across the 0.6B, 20B, and 27B models.

Three extensions compose without changing the masking engine. First, adding shape attributes to the GQA operand query reduces wrong-shape selection from $2.5\%$ under kind-only filtering to zero by construction. Second, nested grammar calls produce CUDA-compiling, numerically matching kernels in both evaluated seeds. Third, a JSON-schema composition with declare-then-use references is reference-consistent by construction, a dependency not expressible in plain JSON Schema \cite{synchromesh2022iclr}. Appendix~A.6 gives the complete protocols.

\subsection{Serving overhead}
\label{sec:eval-overhead}

\autoref{tab:serving-overhead} compares the serving overhead of unconstrained decoding, XGrammar, and the full \gproj{} framework via token throughput. All measurements is conducted on a server with 2 Intel Xeon Gold 6430 CPUs and 512 GB memory, along with 4 RTX PRO 6000 GPUs. We use vLLM 0.24.0 and XGrammar 0.2.3 as the software backbones. DeepSeek-V4-Flash is served in FP8 with tensor parallelism $\mathrm{TP}=4$; Qwen3-0.6B is served on one RTX PRO 6000. We report aggregate generation throughput at batch sizes 1 and 32.

\begin{table}[tb]
\centering
\small
\begin{tabular}{llrrrr}
\toprule
Model & Workload & Batch & Unconstr. & XGrammar & \gproj{} \\
\midrule
DeepSeek-V4-Flash & SQL      & 1  & 5.221    & 5.217    & 4.664 \\
DeepSeek-V4-Flash & SQL      & 32 & 161.339  & 160.195  & 142.093 \\
DeepSeek-V4-Flash & TileLang & 1  & 5.212    & 5.211    & 4.648 \\
DeepSeek-V4-Flash & TileLang & 32 & 159.418  & 158.321  & 140.114 \\
Qwen3-0.6B        & SQL      & 1  & 87.963   & 86.048   & 71.936 \\
Qwen3-0.6B        & SQL      & 32 & 2374.071 & 2008.651 & 1663.163 \\
Qwen3-0.6B        & TileLang & 1  & 87.927   & 86.102   & 70.776 \\
Qwen3-0.6B        & TileLang & 32 & 2370.325 & 2126.275 & 1771.187 \\
\bottomrule
\end{tabular}
\caption{Serving throughput in generated tokens per second (higher is better). }
\label{tab:serving-overhead}
\end{table}

Across the eight configurations, \gproj{} reduces throughput by 10.6--17.8\% relative to XGrammar (14.0\% on average). Relative to unconstrained decoding, the reduction is 17.3\% on average: 10.7--12.1\% for DeepSeek-V4-Flash and 18.2--29.9\% for Qwen3-0.6B. The larger Qwen batch-32 end-to-end gap includes XGrammar's own 10.3--15.4\% reduction; the incremental cost of \gproj{} in those settings remains 16.7--17.2\%. Thus runtime specialization adds a low-teens overhead on top of the masking backend rather than a severe serving regression.
\FloatBarrier

\section{Related Work}
\label{sec:related-work}

Constrained code generation can be organized along four questions: \textbf{(A)} whether a method changes model weights or the decoder's support, \textbf{(B)} what object it constrains, \textbf{(C)} when and at what granularity the constraint is selected, and \textbf{(D)} where the constraint comes from. Our position is a support-set method whose object is a refinement-ordered family of grammar fragments, selected per region from the live environment and induced offline when no specification is available.

\begin{table}[t]
\caption{Capability boundary among the closest policy-level approaches. \checkmark{} denotes a first-class capability, $\triangle$ a specialized or partial form, and -- absence. ``Fresh'' means semantic support synthesized from the generated prefix or a live analysis, rather than dispatch over fixed supports. CSD establishes prefix-derived completion; type-constrained decoding and ChopChop provide deeper property-specific guarantees; ToP provides context-populated modular parsers; and LMQL/guidance provide imperative regional orchestration. Our claim is the complete combination, not dominance in semantic expressiveness.}
\label{tab:capability-boundary}
\centering
\scriptsize
\setlength{\tabcolsep}{2.6pt}
\renewcommand{\arraystretch}{1.04}
\begin{tabularx}{\textwidth}{@{}>{\raggedright\arraybackslash}p{0.205\textwidth}ccccccX@{}}
\toprule
& \multicolumn{2}{c}{Dynamic support} & \multicolumn{3}{c}{Policy object} & & \\
\cmidrule(lr){2-3}\cmidrule(lr){4-6}
System & live & fresh & region & order & fallback & induced & Established result \\
\midrule
Synchromesh/CSD \cite{synchromesh2022iclr} & \checkmark & \checkmark & $\triangle$ & -- & -- & -- & CE-valid completion \\
Type-constrained \cite{typeconstrained2025pldi} & \checkmark & \checkmark & $\triangle$ & -- & -- & -- & type soundness \\
Tree-of-Parsers \cite{top2025colm} & \checkmark & \checkmark & \checkmark & -- & -- & -- & semantic/runtime guarantees \\
ChopChop \cite{chopchop2026popl} & \checkmark & \checkmark & $\triangle$ & -- & -- & -- & soundness/completeness \\
LMQL/guidance \cite{lmql2023pldi} & $\triangle$ & $\triangle$ & \checkmark & -- & $\triangle$ & -- & imperative orchestration \\
Grammar Prompting \cite{grammarprompting2023neurips} & -- & -- & -- & $\triangle$ & -- & $\triangle$ & specialized BNF \\
\textbf{Decode-Time Grammars} & \checkmark & \checkmark & \checkmark & \checkmark & \checkmark & \checkmark & Thm.~1, Lem.~2, Prop.~3 \\
\bottomrule
\end{tabularx}
\end{table}

\subsection{Axis A --- weights or support}
\label{sec:rel-axisA}

Weight-axis methods make desired continuations more probable through training, prompting, exemplars, repair, or probabilistic control. Grammar-aligned decoding adjusts model probabilities under a grammar constraint \cite{grammaraligneddecoding2024neurips}, while sequential-Monte-Carlo methods weight and resample candidates according to semantic potentials \cite{smc-llm-control2025iclr}. For low-resource languages, MultiPL-T translates high-resource programs into test-validated semi-synthetic corpora and fine-tunes pretrained code models on the result \cite{multipl-t2024oopsla}. This data-and-weights route is complementary to ours: we leave model weights fixed and induce reusable support constraints from corpus and schema evidence. These methods can substantially improve generation, but they do not generally remove an invalid continuation from the support before it is sampled.

Support-axis methods instead make selected continuations impossible. XGrammar, SynCode, and Outlines efficiently implement regular and context-free masks \cite{xgrammar2025mlsys,syncode2025tmlr,outlines2023arxiv}; budget-aware masking further excludes prefixes that cannot reach acceptance within the remaining token budget \cite{dang2026mitigating}. Our guarantees live on this axis: once a fragment and runtime environment satisfy the premises of Theorem~1, reference safety does not depend on model capacity or preference.

\subsection{Axis B --- the object constrained}
\label{sec:rel-axisB}

The standard object is a single grammar, regex, or JSON schema used as a recognizer. Richer systems constrain semantic properties. ChopChop formulates semantic constrained decoding through realizability and proves program-level soundness and completeness properties \cite{chopchop2026popl}; type-constrained generation lowers type reachability into the decoder while deliberately trading completeness for tractability \cite{typeconstrained2025pldi}. These analyses are complementary to our refinement order: a semantic monitor can be attached to a fragment selected at any rung.

Our object is a policy over multiple fragments for the same semantic construct. Tree-of-Parsers (ToP) is the closest executor mechanism: it uses modular grammars with runtime-populated slots and sub-parsers \cite{top2025colm}. ToP organizes generation around a fixed construct-to-template structure; we add a refinement-ordered family, per-region rung selection, fallback, and offline induction. Krogmeier and Madhusudan provide the closest conceptual precedent for reasoning over multiple grammars for one base language and selecting among them by learning utility \cite{krogmeier2023oopsla}; we carry this perspective into decode time, where the selected fragment is specialized from the live $\Gam$.

The connection between grammars and environments is classical. Attribute grammars, two-level grammars, definite-clause grammars, adaptable grammars, and data-dependent grammars attach contextual information to syntax during recognition \cite{knuth1968attribute,vanwijngaarden1975algol68,dcg1980,christiansen1990adaptable,shutt1993rag,yakker2010popl}. Floyd's ALGOL result already shows why declaration consistency cannot be captured by an ordinary fixed CFG \cite{floyd1962algol}. Our distinction is operational: instead of analyzing a completed input, we specialize an environment-indexed fragment before emitting the corresponding region and compile the resulting facts directly into token support.

\subsection{Axis C --- when and how finely it is chosen}
\label{sec:rel-axisC}

Some systems select a constraint once before decoding. Grammar Prompting predicts an instance-specific subgrammar and then decodes under it \cite{grammarprompting2023neurips}. Other systems expose mid-decode constraints: LMQL and guidance provide imperative per-slot orchestration \cite{lmql2023pldi}, and ToP invokes context-populated sub-parsers at designated template positions. These mechanisms demonstrate the value of intermediate constraints. Their structure is manually specified or organized around fixed template positions; our policy instead makes selection among a refinement family explicit while specializing each selected fragment from the live environment.

Semantic analyses can also depend on the generated prefix. Synchromesh lowers language-specific completion information into admissible continuations \cite{synchromesh2022iclr}. Monitor-Guided Decoding provides a complementary analyzer-backed route, querying repository-aware static analysis at predefined trigger points and lowering type-consistent suggestions into token masks \cite{mgd2023neurips}; IterGen handles SQL-column dependencies through generate--check--rollback \cite{itergen2025iclr}. These systems establish that generated context can guide subsequent decoding. Our policy makes the \emph{configuration} of that guidance a first-class object: it is selected per region as $\pol(\sort,\Gam)$, instantiated before that region is emitted, and can fall back through the refinement order. At name-complete reference positions, Proposition~3 shows that exact sound and non-blocking support cannot be represented by any finite family of fixed reference supports; it must be synthesized from the prefix. This result distinguishes runtime support construction from static grammar or tag dispatch, rather than from dynamic semantic constrained decoding in general.

\subsection{Axis D --- where the grammar comes from}
\label{sec:rel-axisD}

Most deployed constraints are hand-written. Grammar-induction systems instead approximate a target language from examples: ARVADA synthesizes grammars from program inputs, and HyGenar combines LLM proposals with search and surface-based validation \cite{arvada2021ase,hygenar2025acl}. Babble uses anti-unification to derive reusable abstractions for symbolic search, providing the lineage for our anchor-based alignment \cite{babble2023popl,plotkin1970generalization}.

Runtime context has also conditioned learned generators directly. PHOG conditions a probabilistic code model on program context \cite{phog2016icml}, and Krishnamurthy et al.\ constrain neural semantic parsing using table-specific runtime types \cite{krishnamurthy2017emnlp}. Our carrier is different: the model remains external and frozen, while an induced fragment library and policy alter its support per region. The inductor targets generation utility rather than whole-language recognition accuracy, using corpus coverage, \freebits{}, oracle pass rate, and a hard gate over executable fragments and mined counterexamples (\autoref{sec:construction}).

\subsection{Infrastructure and inherited foundations}
\label{sec:rel-infra}

We build on XGrammar's fast grammar compilation \cite{xgrammar2025mlsys}. XGrammar-2 adds efficient JIT compilation, cross-grammar reuse, and tag-triggered structural dispatch \cite{xgrammar22026cais}. These are complementary execution mechanisms: \gproj{} adds the grammar-call stack, per-hole runtime-$\Gam$ instantiation, declaration extraction, and lexical frame overlays that turn an evolving program environment into fresh reference support. Thus XGrammar supplies the masking backend, while \gproj{} supplies the environment-dependent orchestration.

We also inherit the standard abstraction of holed terms. Hazel studies typed holes during interactive editing \cite{hazel2017popl}, and Sketch fills holes through solver search \cite{sketch2006asplos}; here an LLM fills typed holes under fragments selected and specialized by the decode-time policy.

\subsection{Where we sit}
\label{sec:rel-position}

\autoref{tab:capability-boundary} makes the boundary explicit. Prior work establishes every ingredient separately, including dynamic semantic masks, programmable semantic realizability, runtime-populated parsers, learned specialized grammars, and efficient dynamic execution. Decode-time grammars are the first to organize these capabilities as an \emph{inducible refinement-ordered policy}: fragments are selected per region, specialized from an evolving $\Gam$, and weakened by explicit fallback. Theorem~1 establishes the resulting support-set reference guarantee, Lemma~2 preserves it under refinement, and Proposition~3 characterizes why exact name support at name-complete positions is necessarily a decode-time object. In this sense, constrained decoding becomes an execution substrate for incremental PL analyses whose results participate directly in constructing the generated program.

\section{Conclusion}
\label{sec:conclusion}

Decode-time grammars bind the decoder's support to the live runtime environment
$\Gam$. \gproj{} excludes ghost buffers, columns, APIs, options, and tool
references before sampling, while leaving semantic and algorithmic choices to
the model. Across DSLs, libraries, command-line tools, and models from 0.6B to
236B parameters, the same division of labor holds: stronger models improve the
program sketch, while the grammar provides a stable environment-bound reference
guarantee. Grammars need not only recognize syntax; during generation, they can
bind a program to the environment in which it must run.

More broadly, the decode loop can serve as an enforcement point for incremental
program analysis: an analysis derives facts from the evolving prefix, a support
transformer removes continuations those facts rule out, and a refinement policy
provides fallback when stronger constraints are unavailable. This paper realizes
that pattern for symbol-table scope. Extending it to richer prefix-indexed
analyses and learned switching policies is a natural next step.

\bibliographystyle{ACM-Reference-Format}
\bibliography{refs}
\ifincludeappendix
\appendix
\section{Full Proofs and Formal Details}
\label{app:full-proofs}

This supplement expands material the main text states and sketches: \autoref{app:gbnf}--\autoref{app:escaping} the
minimal-GBNF semantics and the Escaping Lemma behind Theorem~1 (standard grammar
substitution \cite{hopcroftullman1979,aho2006compilers}, with the one GBNF-specific
parenthesization subtlety); \autoref{app:engine-assumption} the
differential-testing protocol that discharges Assumption~1; \autoref{app:prop3-proof} the full proof of the
necessity theorem (Prop~3 of the main text) and its bounded-name reading; \autoref{app:boundary} the
$\Gam$-scope guarantee-boundary table in full; \autoref{app:composition} the lifting/composition details of the
evaluation; \autoref{app:refinement} the witnessing argument for the refinement order on the induction
ladder; \autoref{app:non-cf} the full non-context-freeness argument behind Remark~1; \autoref{app:p4-boundary} the
precise write-once restriction behind P4 (continuing the \autoref{app:boundary} table); \autoref{app:whole-program-proof} the full proof of the whole-program
corollary. Numbering of theorems here is local to this supplement; named references
(Theorem~1, Prop~3, \autoref{sec:formalization}, \autoref{sec:evaluation}) point into the main text.

\subsection{Minimal GBNF fragment and denotation}
\label{app:gbnf}
\paragraph{Minimal GBNF fragment.} Take a sub-grammar of GBNF sufficient for this lemma. The terminal alphabet is the set of Unicode scalar values (the engine consumes UTF-8); the metacharacters are $\{\,\texttt{"}\ \mid\ (\ )\,\}$. Grammar expressions:
$$
e \;::=\; \mathit{lit} \ \mid\ e\,e \ \mid\ e \mid e \ \mid\ (\,e\,) \ \mid\ N,
\qquad
\mathit{lit} \;::=\; \texttt{"}\,\mathit{ch}^*\,\texttt{"}
$$
where $\mathit{lit}$ is a \textbf{double-quoted string literal}, each $\mathit{ch}$ a verbatim character outside $\{\texttt{"},\backslash,\text{NL},\text{CR},\text{TAB}\}$ or one of $\backslash\backslash,\ \backslash\texttt{"},\ \backslash\mathtt{n},\ \backslash\mathtt{r},\ \backslash\mathtt{t}$ --- exactly the escape routine's output --- and $N$ a nonterminal. \textbf{Fixed disambiguation rules:} (i) concatenation $e\,e$ binds \textbf{tighter} than alternation $e\mid e$; (ii) parentheses $(\,e\,)$ group only; (iii) a string literal is a \textbf{single indivisible token} whose boundaries are its paired unescaped double-quotes.

\paragraph{Denotation} $\mathcal{L}\proj{\cdot} : e \to \mathcal{P}(\text{strings})$:
$$
\begin{aligned}
\mathcal{L}\proj{\texttt{"}s\texttt{"}} &= \{\,\mathrm{unescape}(s)\,\} &&\text{(literal: a one-string language)}\\
\mathcal{L}\proj{e_1\,e_2} &= \mathcal{L}\proj{e_1} \cdot \mathcal{L}\proj{e_2} &&\text{(concatenation = product)}\\
\mathcal{L}\proj{e_1 \mid e_2} &= \mathcal{L}\proj{e_1} \cup \mathcal{L}\proj{e_2} &&\text{(alternation = union)}\\
\mathcal{L}\proj{(\,e\,)} &= \mathcal{L}\proj{e} &&\text{(parentheses: identity)}
\end{aligned}
$$
with $\mathrm{unescape}\circ\mathrm{escape}=\mathrm{id}$. For nonterminals, fix a production environment $P$ mapping each nonterminal to its defining expression; the clause $\mathcal{L}\proj{N}=\mathcal{L}\proj{P(N)}$, read as the least solution of the resulting equation system (the standard least-fixed-point semantics of CFGs \cite{hopcroftullman1979}), makes $\mathcal{L}\proj{\cdot}$ total --- on nonterminal-free expressions, in particular every $\renderalt$ output, the four clauses alone determine it, and derivation trees are the standard derivations of $P$. This is the entire semantics used below; it depends only on the three disambiguation rules, not on engine internals.

\subsection{The Escaping Lemma (Lemma 1): standard substitution, one engine subtlety}
\label{app:escaping}
The Escaping Lemma is an instance of \textbf{standard grammar substitution}, and we
treat it as such rather than reprove it from scratch. Escaping renders each
$\esc{c_i}=\texttt{"}\,\mathrm{escape}(c_i)\,\texttt{"}$ a single indivisible terminal
(the escape is invertible, so $\mathcal{L}\proj{\esc{c_i}}=\{c_i\}$ --- lexical
atomicity of an escaped literal, standard \cite{aho2006compilers}), and context-free
languages are closed under substituting a terminal set into a production
\cite{hopcroftullman1979}; hence a parenthesized candidate alternation $e_\vee$ spliced
into a concatenative context $\alpha\,(\,e_\vee\,)\,\beta$ denotes exactly
$L(\alpha)\cdot C\cdot L(\beta)$. \texttt{render\_alternation} de-duplicates
order-preservingly ($C$ unchanged) and raises \texttt{GammaEmptyError} on the empty set
(Theorem~1's non-empty side-condition).

The one point that is \emph{not} routine --- because it is specific to GBNF's operator
precedence --- is \textbf{why the parentheses are mandatory}. Drop them, and
$\alpha\,\esc{c_1}\mid\cdots\mid\esc{c_k}\,\beta$ parses (\texttt{|} binding loosest) as
$(\alpha\,c_1)\mid\cdots\mid(c_k\,\beta)$, whose language is in general
\emph{incomparable} with $L(\alpha)\,C\,L(\beta)$ --- for $\alpha=$``\texttt{T.gemm(}'',
$\beta=$``\texttt{)}'', $k=2$, the two sets are disjoint --- silently splitting the
production and admitting stray fragments. The parentheses close this split; it is the
sole non-routine invariant of the splice, and it is exactly one of the cases the
differential test of \autoref{app:engine-assumption} pins.
\subsection{Assumption 1: differential-testing protocol}
\label{app:engine-assumption}
\paragraph{Assumption 1 (engine--denotation correspondence).} The minimal-GBNF denotation $\mathcal{L}\proj{\cdot}$ above agrees with the masking engine's (XGrammar's) actual GBNF parser on the fragments we instantiate. The syntactic risk surface is finite and explicitly enumerated: instantiated slot-carrying productions use exactly two GBNF constructs --- \emph{double-quoted string literals} and \emph{parenthesized alternation} --- with no regex and no repetition (the closed construct set fixed in \autoref{sec:formal-prelim}; slot-free productions may use richer constructs and contribute no references). We empirically validate this assumption by differential testing~\cite{mckeeman1998differential} over that complete construct vocabulary and adversarial representatives of its value space. Driving the executor's own instantiation path (\texttt{render\_alternation} $\to$ \texttt{from\_ebnf} $\to$ the engine's acceptance check), across \textbf{24 $\Gam\times$construct configurations} --- covering all five escape classes plus quote, backslash, newline, unicode, 256-character, and metacharacter ($\mid$, parentheses, quote) boundary names, single- vs multi-candidate, and the empty-$\Gam$ error path --- the engine \textbf{accepts all 90 in-set strings and rejects all 772 controlled out-of-set strings} (ghost names, near-misses, bare prefixes/suffixes, metacharacter injection, and \texttt{|}-split fragments). The engine's accept-set equals $\mathcal{L}\proj{\cdot}$ = the declared-name set exactly on these tests, with zero discrepancy (test \texttt{test\_assumption1\_differential}, 16 cases). Two cases pin the subtle points of this section directly: that the parentheses keep an alternation confined to the slot (no \texttt{|}-split escape), and that a name containing GBNF metacharacters is treated as a literal, not parsed as structure. The token-level clause of Assumption~1 (exact incremental recognizer) is additionally probed by bitmask-replay tests in the artifact (per-step admitted-token sets recomputed against the compiled automaton, ghost tokens asserted masked). These tests cover the complete construct vocabulary used by instantiated slots, but do not replace the engine-correspondence premise; Theorem~1 remains explicitly modulo Assumption~1.

\subsection{Proof of Proposition 3, and the bounded-name reading}
\label{app:prop3-proof}
\begin{definition}[Static masking system (formal; cf.\ \autoref{sec:construction} of the main text)]
Fix an unbounded identifier space $\mathrm{Id}$. A \textbf{static masking system} is a
finite set $\mathfrak{G}=\{G_1,\dots,G_m\}$ of grammars fixed before decoding --- each
contributing at a reference position a \emph{fixed} support set $S_i$, in practice a
finite name set or all of $\mathrm{Id}$; the proof uses only that the library is
finite and fixed (the \textbf{static-support assumption}: the support presented at a reference
position depends only on the dispatched grammar and production, so the
reference-support library is finite and fixed before decoding; closing it under
Boolean combinations still yields finitely many supports, so the pigeonhole below is
unaffected --- $\delta$ arbitrary, even uncomputable, is already the strongest
statically-supported adversary) --- together
with an \emph{arbitrary} dispatch function $\delta$ mapping prefixes to
$\{1,\dots,m\}$: $\delta$ may inspect the whole prefix, but may not synthesize a
support outside $\{S_1,\dots,S_m\}$. The system is \emph{sound} if at every reference
position its support contains no name outside the declared set, and
\emph{non-blocking} if it never masks out a scope-safe continuation.
\end{definition}
\begin{proof}[Full proof of Proposition 3 (main text)]
(ii) The declare-then-use hypothesis furnishes the witnesses: for each
$n\in\mathrm{Id}$, instantiating the form at $n$ and truncating at its use-site
yields a scope-safe prefix $p_D$ declaring exactly $D=\{n\}$, ending at a reference
position where $n$ extends to a complete scope-safe program (the form's own
completion) --- so the names that are scope-safe continuations there are exactly
$D$. The singleton family suffices for the pigeonhole; DSLs with chained
declarations realize every finite $D$ the same way. The system presents at $p_D$ one of finitely many supports drawn from the fixed
library; write $S(p_D)$ for it. Soundness at $p_D$ requires $S(p_D)\subseteq D$
(ruling out any all-of-$\mathrm{Id}$ support); non-blocking requires
$D\subseteq S(p_D)$ (every name in $D$ is a scope-safe continuation); together they
force $S(p_D)=D$. There are infinitely many distinct finite $D$ but only finitely
many eligible supports, so two prefixes $p_D\neq p_{D'}$ with $D\neq D'$ receive
the same support $S=D=D'$ --- contradiction. (Even the weakest liveness fails under soundness alone: the fresh-name singletons
$\{x_1\},\{x_2\},\dots$ are pairwise disjoint; soundness rules out any
all-of-$\mathrm{Id}$ support, so the pigeonhole over the finitely many remaining
supports yields two \emph{disjoint} $D,D'$ sharing one support $S$, and
$S\subseteq D\cap D'=\emptyset$ --- a total deadlock at that position.) (iii) ($\Leftarrow$) is Theorem~1 (modulo Assumption~1) plus Lemma~1's exact
instantiation with the pure declaration-set query ($\queryslot=\dom$), stated for
non-empty $\dom(\Gam(p))$ --- an empty environment admits no scope-safe
continuation, so the empty support is vacuously sound and non-blocking (the
implementation refuses such positions, \texttt{GammaEmptyError}).
($\Rightarrow$) soundness gives $S\subseteq\dom(\Gam(p))$; non-blocking gives the
reverse inclusion, since each declared name extends to a scope-safe completion
(scope-safety is name-level, so any skeleton completion over declared names
witnesses it).
\end{proof}
\paragraph{The bounded-name reading.}
If $\mathrm{Id}$ is artificially bounded to $n$ names, a static family of $2^{n}$
grammars (one per declaration set) evades (ii) --- at an exponential blow-up; this is
the pre-enumeration reading of \autoref{sec:construction} of the main text. The content of Proposition~3 is that the practical
impossibility is structural, not merely exponential: over the unbounded name spaces of
real DSLs, \emph{there is no finite static support family at all}. Prefix-dependent
reference support is therefore a necessary runtime object. Self-extension --- the
program extending the grammar that constrains its continuation --- is our
grammar-level realization of that object.

\subsection{The environment-scope guarantee boundary: full table}
\label{app:boundary}
\autoref{tab:decidability-boundary} gives the boundary in full: for each
$\Gam$-expressible property class (P1--P4) the enforcing mechanism and its
measured witness, and for each richer class (N1--N3) the additional analysis it
requires.
\begin{table}[!tp]
\centering
\small
\caption{The applicability boundary: $\Gam$-expressible properties (P1--P4) are guaranteed at construction; N1--N3 compose through a verifier, monitor, or verified support set, with coarse-rung fallback (\autoref{sec:sdcg}).}
\label{tab:decidability-boundary}
\begin{tabularx}{\textwidth}{@{}c >{\raggedright\arraybackslash}p{0.16\textwidth} c X X@{}}
\toprule
\# & Property & $\Gam$-slot? & Mechanism & Witness \\
\midrule
P1 & \textbf{scope safety} (reference $\in$ declared, no-ghost) & \checkmark & Theorem~1: $\Gam$-slot whitelist + Lemma~1 & gamma axis \textbf{0 NameError}; operands off-\Gam: 100\% ghosts; compile cliff 0$\to$100\% (main text: SQL \autoref{sec:eval-sql}, P4 \autoref{sec:eval-invariance}, GQA \autoref{sec:eval-roadmap}) \\
\addlinespace
P2 & \textbf{sort / type-tag match} & \checkmark$^{*}$ & registry asserts \texttt{fragment.sort == expected} + kind-filtered slot; cross-boundary use additionally requires a host$\leftrightarrow$embedded sort correspondence & GQA gamma operands kind-filtered, 0 sort violations \\
\addlinespace
P3 & \textbf{declaration-before-use} (\textbf{hole granularity}) & \checkmark & write-once live frames + slot evaluation against the active environment at hole entry. Restriction: declaration-introduction at the hole boundary, not statement granularity & Corollary~2 \\
\addlinespace
P4 & \textbf{shape agreement under a shape-typed slot} & \checkmark$^{\dagger}$ & shape into $\Gam$ attrs, operand slot shape-filtered: $\queryslot(\Gam)=\Gam.\mathrm{names}(\mathit{sort}=\mathtt{Buffer},\mathit{shape}=\sigma)$ $\to$ shape-mismatch candidates masked out & \autoref{sec:evaluation}, real 0.6B n=20: kind-only cross-shape mis-selection 2.5\% vs shape-typed 0\% by construction \\
\addlinespace
N1 & \textbf{semantic-value / numerical correctness} & $\times$ & semantic verifier or monitor (undecidable in general; Rice) & gamma numeric 62\% ($n{=}8$, main text \autoref{sec:eval-roadmap}): compilation and reference safety do not imply numerical correctness \\
\addlinespace
N2 & \textbf{general termination} & $\times$ & termination proof or verified support set (halting in general) & outside the monotone-$\Gam$ guarantee boundary \\
\addlinespace
N3 & \textbf{attribute-layer invariants} & $\times$ & stateful monitor (\texttt{update} may roll back \texttt{True->False}) & design-level declaration \\
\bottomrule
\end{tabularx}

\smallskip
\footnotesize
$^{*}$ \checkmark{} (single registry / hole, \emph{operational}); \textbf{conditional} across boundaries. \quad
$^{\dagger}$ \checkmark{} \textbf{by construction} (shape write-once); \textbf{binding-ness measured e2e}. \quad
N3: $\times$ not monotone.
\end{table}
\FloatBarrier

\subsection{Lifting and composing: attribute-typed environments, nesting, induction}
\label{app:composition}
\paragraph{Attr-typed \Gam{} (one dimension up).}
On the GQA kernel we tighten the shared-operand slot from a \texttt{kind} filter
(4 candidates) to a \texttt{kind+shape} filter (2 candidates), with zero engine
changes. \textbf{Cross-shape mis-selection (choosing K/V where a
$[\texttt{block\_M},\texttt{dim}]$ Q is required) becomes impossible by
construction} --- a deterministic adversarial test confirms K/V are masked out
of the support set. The guarantee is binding: on real 0.6B ($n=20$, 160
holes/arm) the kind-only arm mis-selects a wrong-shape buffer \textbf{2.5\%} of
the time, whereas the kind+shape arm is \textbf{0\% by construction}. The judge
is symbolic shape, independent of the runtime \texttt{block\_M=block\_N}
coincidence (a numeric oracle cannot catch a Q$\leftrightarrow$K swap when shapes
are equal). This lifts the guarantee from names to types; same-shape
mis-selection requires a separate mechanism.

\paragraph{Multi-grammar nested execution composes and stays correct.}
An outer Python region embeds an inner \texttt{@tilelang.jit} kernel region, which in a three-layer variant further embeds a loop-body region; the executor pushes and pops fragments to arbitrary depth. Qwen3-0.6B under this composition produces kernels that exec, CUDA-compile, and \textbf{numerically match} the torch reference (\texttt{max\_abs\_err} = 0.0157, 2/2 seeds). The same compile-plus-numeric result holds (2/2 seeds each) for the TagDispatch single-matcher region switch, a restricted-Python outer grammar, and a non-GEMM elementwise task. Separately (no torch oracle), the JSON-schema composition with a \Gam-dependent reference (declare-then-use) is a \textbf{construction-level reference-consistency = 100\%} result --- every emitted formula references only previously-declared variables, a property a plain JSON Schema cannot express~\cite{synchromesh2022iclr}.

\paragraph{Inducing the ladder across paradigms (SQL / jq / TileLang).}
The inductor auto-induces a 4-rung refinement ladder (\texttt{base}$\to$\texttt{gamma}$\to$\texttt{ctx}$\to$\texttt{pin}) on three DSLs of different paradigm, each LLM-proposed fragment validated by a \textbf{parser hard-gate that uses the same engine semantics as the executor mask} (the gate instantiates \texttt{\%SLOT\%} against \Gam{} and rejects ghost names --- the same mechanism that masks tokens at decode time). The three induced paradigms each carry a real oracle: \textbf{SQL columns} (induced over 25 real Spider samples, passing the hard-gate, equivalent to the hand-written fragment), \textbf{jq fields} (field $\in$ JSON keys; \texttt{gamma} answers 7/7 while \texttt{base} returns a ghost-field \emph{silent null} --- more insidious than SQL's crash), and \textbf{TileLang operands} (operand $\in$ declared buffers + shape). Stage-A2 anchor-based structural alignment (\autoref{sec:construction} of the main text) recovers \texttt{block\_M}/\texttt{block\_N}/\texttt{block\_K} across 3 GEMM variants at 100\% template coverage, and the stage-A4 per-tier \freebits{} score is monotone along the ladder (the quantitative face of Prop~1)~\cite{krogmeier2023oopsla}. A fourth domain, \textbf{library usage (AscendC)}, evaluates the static induction layer: from the API surface we learn the grammar mechanically (API names as an enumeration, template arguments as enumerations over the library's domains) and show a ghost API call is construction-level impossible. This case uses the parser hard-gate because the vendor compiler is unavailable locally (\autoref{sec:eval-induction} of the main text). Induction is grammar-as-distillation: the strong model distills programming-language-specific context understanding into the grammar / support set, not into weights.

The implementation carries a full test suite (\texttt{pytest -m 'not gpu'} + GPU end-to-end), including adversarial probes of the construction-level claims (32-deep nesting, 300-name \Gam{} sets, and maliciously injected names).

\subsection{Witnessing the refinement order on the ladder (full argument)}
\label{app:refinement}
\paragraph{Witnessing $\refn$ on the ladder; $\freebits$ monotonicity.} General $L$-inclusion is not cheaply decidable, and we never decide it. On our induction ladder (pin $\refn$ ctx $\refn$ gamma $\refn$ base) all rungs share one carrier production and differ only in a slot's candidate set, with $C_{\mathrm{tight}}\subseteq C_{\mathrm{loose}}$ --- so the projection functions of any two rungs are literally identical and projection alignment $\palign$ holds trivially: all three conjuncts of $\refn$ close on the ladder; by Lemma~1 the slot instantiates to language exactly $C$, so $\Lang{G_{\mathrm{tight}}}=L(\alpha)\,C_{\mathrm{tight}}\,L(\beta)\subseteq L(\alpha)\,C_{\mathrm{loose}}\,L(\beta)=\Lang{G_{\mathrm{loose}}}$ --- $L$-refinement is immediate, a cheap \emph{sufficient} condition, never a general decision. As a one-line consequence, a candidate \emph{subset} gives monotone non-increasing $\freebits$ as one tightens \emph{down} $\refn$ --- pointwise, per shared prefix: at each step the tighter rung's admitted-token set is a subset of the looser rung's, while the per-decode sums reported in \autoref{sec:evaluation} aggregate over each arm's own sampled paths --- so the experimental chain --- $\freebits$ decreasing from $998$ at the loosest rung, through $538$ and $287$, to $92$ at the tightest (\autoref{sec:evaluation}) --- is confirmation of the theory, not its source.

\subsection{The full non-context-freeness argument for Remark 1 (main text)}
\label{app:non-cf}
The claim --- ``reference $\subseteq$ declared, declaration-before-use'' is not
CFG-expressible --- follows by a closure argument in the tradition of Floyd's
ALGOL~60 proof \cite{floyd1962algol}. Names must range over an alphabet of at least
two letters --- a hypothesis this proof route needs: over a \emph{unary} name space
the minimal single-declare/single-use skeleton intersects to $\{a^n\#a^n\}$, which
\emph{is} context-free, so the copy-language reduction below has no unary
counterpart. (The hypothesis delimits the route, not the property: a unary skeleton
with one declaration and \emph{two} uses intersects $L_{\mathrm{scope}}$ to a
three-way-matched language of the $\{a^nba^nba^n\}$ kind, again non-context-free.)
Let $L_{\mathrm{scope}}$ be the declaration-consistent language
(\autoref{sec:formal-prelim} of the main text) over names in $\{a,b\}^{+}$, and suppose it were
context-free. Context-free languages are closed under intersection with regular
languages \cite{hopcroftullman1979}, and the skeleton
$R=\mathtt{decl}(w)\,;\,\mathtt{use}(w')$ with $w,w'\in\{a,b\}^{+}$ is regular;
with a single declaration in scope, declaration-consistency forces $w'=w$, so
$L_{\mathrm{scope}}\cap R=\{\mathtt{decl}(w)\,;\,\mathtt{use}(w):w\in\{a,b\}^{+}\}$;
the homomorphism erasing the fixed delimiters (wlog disjoint from $\{a,b\}$; otherwise apply a finite transduction, under which CFLs are likewise closed) and sending the separator to $\#$
maps this to the copy language $\{w\#w : w\in\{a,b\}^{+}\}$, and context-free
languages are closed under homomorphism \cite{hopcroftullman1979}, so the copy
language would be context-free --- yet it is not (textbook, via Ogden's lemma) ---
contradiction. Hence no
single CFG accepts exactly $L_{\mathrm{scope}}$; $\wallop$ sidesteps this by
re-instantiating per runtime $\Gam$, each instance a plain CFG.

\subsection{The environment-scope boundary (continued): P4's precise restriction}
\label{app:p4-boundary}
\paragraph{P4's precise restriction.} P4 is a \emph{construction-level} guarantee whose decidability rests on shape being a write-once constant (written once at \texttt{declare}, never rewritten by \texttt{update} --- caller discipline). Under that restriction shape is constant over the name's lifetime, the shape-filter is well-defined, and by the \emph{same} Lemma~1 + Theorem~1 argument the operand at that position has shape exactly $\sigma$. This is not attribute-layer monotonicity (N3 still holds) --- it is the controlled ``attribute as write-once constant'' sub-use; a rewritten attribute (e.g. \texttt{initialized}) falls under N3. Unlike P1's GQA $p<0.001$ backing, P4's binding-ness is measured end-to-end (\autoref{sec:evaluation}), showing the guarantee is non-vacuous (the model really does mis-select wrong-shape buffers when available) rather than a statistical claim (P4 is immune to sample size).

\subsection{Proof of the whole-program corollary}
\label{app:whole-program-proof}
\begin{proof}[Full proof of the whole-program corollary (main text)]
Fix a completed generation $w$ with grammar-call tree $T$ (finite and acyclic,
\autoref{sec:construction} of the main text). Any derivation of $w$ in the composed grammar factors
through the leaf fragments instantiated along $T$. For each reference occurrence
$r$, let $h(r)$ be its leaf hole and let $\Gam_r$ be the active environment when
that hole is entered. By Lemma~1, $r$'s yield lies in
$\queryslot(\Gam_r)$; by WF-query it lies in $\dom(\Gam_r)$. Thus
$\operatorname{name}(r)\in\dom(\Gam_r)$, proving hole-granularity
declaration-before-use. Later updates or frame pops affect later candidate sets,
not the earlier resolution. For references resolved to the persistent global layer,
name$\to$sort monotonicity additionally places them in
$\dom(\Gam^g_{\mathrm{final}})$. Composite fragments contribute reference
positions through their leaf fragments, which the hypothesis covers one by one.
\end{proof}

\fi
\end{document}